\documentclass[11pt]{article}

\usepackage{algorithm}
\usepackage{algpseudocodex,float}
\usepackage{lipsum}
\usepackage{url}
\usepackage{graphicx}
\usepackage{amsmath}
\usepackage{amsthm}
\usepackage{indentfirst}
\usepackage{cases}
\usepackage{xcolor}
\usepackage{comment}
\newcommand{\trfrcost}{\lambda}
\newcommand{\hyperpar}{\alpha}
\usepackage{times}
\usepackage{anysize}
\marginsize{1in}{1in}{1in}{1in}

\usepackage{mathtools}

\usepackage{enumitem}
\setlist[enumerate]{listparindent=\parindent}

\usepackage{balance}

\AtBeginDocument{\pagenumbering{arabic}}
\allowdisplaybreaks[4]
\setlength\arraycolsep{1pt}

\makeatletter

\newtheorem{Lemma}{Lemma}
\newtheorem{observation}{Observation}
\newtheorem{Proposition}{Proposition}

\begin{document}
\title{Cost-Driven Data Replication with Predictions}

\author{
        Tianyu Zuo, Xueyan Tang, and Bu Sung Lee\\
        Nanyang Technological University\\
        Singapore\\
        zuot0001@e.ntu.edu.sg, \{asxytang,~ebslee\}@ntu.edu.sg
}
\date{}
\maketitle

\begin{abstract}
This paper studies an online replication problem for distributed data access. The goal is to dynamically
create and delete data copies in a multi-server system as time passes to minimize the
total storage and network cost of serving access requests. We study the problem in the emergent learning-augmented setting, assuming simple binary predictions about inter-request times at individual servers. We develop an online algorithm and prove that it is $\big(\frac{5+\hyperpar}{3}\big)$-consistent (competitiveness under
perfect predictions) and $\big(1+\frac{1}{\hyperpar}\big)$-robust (competitiveness under terrible predictions), where $\hyperpar\in (0, 1]$ is a hyper-parameter representing the level of distrust in the predictions. We also study the impact of mispredictions on the competitive ratio of the proposed algorithm and adapt it to achieve a bounded robustness while retaining its consistency. We further establish a lower bound of $\frac{3}{2}$ on the consistency of any deterministic learning-augmented algorithm. Experimental evaluations are carried out to evaluate our algorithms using real data access traces. 
\end{abstract}

\section{Introduction}
Online decision-making problems are an important field of algorithmic research. These problems seek to make a chain of judicious decisions as time moves on, without the knowledge of what will happen in the future. The data replication (or caching) problem for distributed data access at various locations is one such example with many application scenarios, such as cloud storage services, content distribution networks, edge and fog computing. The goal of this problem is to store data copies in a distributed system, so that the requests for data access can be satisfied with minimal service costs. The service costs normally include the storage and network costs, because storing data copies in servers and transferring data among servers both consume resources and incur expenses for service providers \cite{2017Data, 2018Fog}. In this paper, we study dynamic replication in a distributed system to minimize the total storage and network cost for serving a sequence of data access requests, where the storage cost is modelled proportional to the time duration of storage and the network cost is modelled proportional to the amount of network traffic.
Unaware of the access times and locations of future requests, an online algorithm of managing data copies is needed to minimize the cumulative service cost over the time span.

One paradigm of tackling uncertainty in future inputs is competitive analysis, which refers to the idea of guaranteeing a bound on the performance of an online algorithm over all instances of the problem. To this end, a metric called {\em competitive ratio} is used to measure the ratio between the costs produced by an online algorithm (without knowledge of future) and the optimal offline solution (with full knowledge of future) in the worst case \cite{borodin1998, vazirani2013approximation}. Some researches have been carried out to study the data replication problem with competitive analysis. Mansouri and Erradi \cite{mansouri2018cost}  developed online algorithms for a two-tier storage system which consists of a hot tier with high storage cost but low access cost, and a cold tier with low storage cost but high access cost. Mansouri {\em et al.} \cite{mansouri2019cost}  developed online algorithms for data replication and migration in cloud storage services with competitive ratios dependent on the max/min cost ratio among data centers.
Veeravalli \cite{veeravalli2003network} presented a general model for migrating and caching shared data in a network of servers.
Assuming the storage costs of all servers are identical, Bar-Yehuda {\em et al.} \cite{bar2012growing} developed an $O(\log \delta)$-competitive online algorithm where $\delta$ is the normalized diameter of the underlying network.
Under the same assumption of identical storage costs of all servers, Wang {\em et al.} \cite{wang2018cost} proposed an optimal offline solution by dynamic programming and a $3$-competitive online algorithm in the case of uniform transfer costs among servers. Recently, they also extended the solutions to servers with distinct storage costs and presented an online algorithm claimed to be $2$-competitive \cite{2021Cost} (the claim however is not true, as shall be explained in Section \ref{remarks}).

Traditional competitive analysis, however, is often too pessimistic in that it assumes no knowledge about future inputs at all. With the rapidly growing power of machine learning technologies, the emergence of predictive models in recent years provides another possible way of beating uncertainty in online decision-making problems. By forecasting the future based on historical data, predictive models offer additional information for decision-making. We can use the predictions about future inputs to design more effective online algorithms both in theory and practice. Nevertheless, predictive models are usually imperfect and prone to errors. Unconditional reliance on the forecasts of predictive models may on the contrary deteriorate online algorithms \cite{mitzenmacher2022algorithms}. Incorrect or inaccurate predictions may mislead online algorithms to make wrong decisions and result in unbounded competitive ratios. Hence, machine-learned predictions should be wisely incorporated into the algorithm design, so that both prediction use and worst-case guarantee can be achieved.

Some recent studies have explored augmenting online algorithms with predictions to address the classical ski-rental problem \cite{1988Com}.
Kumar \textit{et al.} \cite{2018kumar} proposed a deterministic algorithm and a randomized algorithm with the use of predictions. By introducing a hyper-parameter to manipulate the reliance on the predictions, their algorithms achieve a tight trade-off between the best case of perfect predictions and the worst case of terrible predictions \cite{2020optimal}.
Gollapudi and Panigrahi \cite{2019online} further considered a more general case where multiple predictions are provided by a collection of machine learning predictors. Based on the distribution of the predictions, their algorithm carefully determines when to rent or buy skis. Kodialam \cite{2019k} studied another scenario of using  machine-learned predictions, where the predictions are in the probabilistic form rather than the deterministic form. He developed a randomized algorithm with the aid of such soft machine-learned predictions.

Our work is inspired by the above studies.
As shall be elaborated later, due to the trade-off between storage and network costs, making a replication decision at a single server in our context of distributed data access resembles the ski-rental problem. But our problem is much more complicated than ski-rental in that it involves multiple servers and multiple requests with inter-dependencies.
To the best of our knowledge, there has been no study on making sensible use of predictions for cost-driven data replication in a distributed system.
We aim to develop a learning-augmented online algorithm to optimize the cumulative cost of serving a request sequence.

Our contributions are summarized as follows.
\begin{enumerate}
    \item We propose an online replication algorithm which uses the forecasts provided by predictors, in order to minimize the total cost of serving a request sequence.
    \item We conduct theoretical analysis to show that the proposed algorithm is $\big(\frac{5+\hyperpar}{3}\big)$-consistent (competitive ratio under perfect predictions), and $\big(1+\frac{1}{\hyperpar}\big)$-robust (competitive ratio under terrible predictions), where $\hyperpar\in (0, 1]$ is a hyper-parameter representing the level of distrust in the predictions. We also study the impact of mispredictions on the competitive ratio and adapt our algorithm to achieve a bounded robustness while retaining its consistency.
    \item We establish a lower bound of $\frac{3}{2}$ on the consistency of any deterministic learning-augmented algorithm. This implies that it is not possible to achieve a consistency approaching $1$ in our problem. 
    \item We empirically evaluate our algorithm using real data access traces, and show that it can make effective use of predictions to improve performance with increasing prediction accuracy.
\end{enumerate}

\section{Problem Definition}
We consider a system including $n$ geo-distributed servers (or sites) $s_{1},s_{2},$ $\dots,s_{n}$. A data object is hosted in the system and copies of this object can be created and stored in any servers.\footnote{We do not consider any capacity limit on creating data copies in servers, since storage is usually of large and sufficient
capacity nowadays. Hence, we focus on the management of one data object, as different objects can be handled separately.}  Storing a data copy in a server incurs a cost of 1 per time unit. The data object can also be transferred between servers when needed. The transfer cost of the object between any two servers is $\trfrcost$ $(\trfrcost > 0)$.

Requests to access the data arise at the servers over time (e.g., due to the computational jobs running at the servers or user requests). When a request arises at a server $s_j$, if $s_j$ holds a data copy, the request is served locally. Otherwise, the object has to be transferred from a server holding a copy to $s_j$ in order to serve the request.
$s_j$ may, as a result of the transfer, create a data copy and hold it for some period of time. We denote the sequence of requests arising at the servers as $\langle r_1, r_2, r_3, \dots\rangle$. For each request $r_i$, let $s[r_i]$ denote the server at which $r_i$ arises and $t_i$ denote the time when $r_i$ arises. For simplicity, we assume that all requests arise at distinct time instants, i.e., $t_1<t_2<t_3<\cdots$.
In addition, we assume only one data copy placed in server $s_1$ initially. To facilitate algorithm design and analysis, we add a dummy request $r_0$ arising at server $s_1$ at time $0$. Note that $r_0$ does not incur any additional cost for serving the request sequence.

We seek to develop a \textit{replication strategy} that determines the data copies to create and hold as well as the transfers to carry out in an \textit{online} manner to serve all requests. There must be at least one data copy stored at all times to preserve the data. Our objective is to minimize the total storage and transfer cost of serving a request sequence. We focus on the problem in the learning-augmented setting and assume there are predictions of inter-request times at individual servers (e.g., based on the request history or other features). Specifically, we assume that after a request arises at a server, a simple binary prediction is available about whether the next request at the same server will arise within or beyond a period of $\trfrcost$ time units from the current request. Since there are $n$ servers, we have $n$ predictions at any time, each for the next request at one server. We would like to make sensible use of the predictions to reduce the cost for serving requests.

For an online algorithm with predictions, the competitive ratio is often expressed as a function of prediction errors. As two special cases, when all predictions are perfect (zero error), the competitive ratio is known as  \textit{consistency}, whereas when the predictions can be arbitrarily bad (unbounded errors), the competitive ratio is known as  \textit{robustness} \cite{2018kumar}.

\section{Algorithm Design}
\label{design}
We start with investigating the problem structure. Observe that there is a trade-off between holding and not holding a data copy at each individual server.
If no copy is held by a server, we have to pay the transfer cost for serving each local request. If a copy is held by a server, we have to pay the storage cost which is proportional to the duration of storage. Such a trade-off appears to resemble that between rent and buy in the classical ski-rental problem \cite{1988Com}. Without predictions of inter-request times, to achieve decent competitiveness, an intuitive idea is to let a server $s_j$ hold a copy for a period of $\trfrcost$ time units after serving every local request. If the next request arises at $s_j$ before this period ends, the request will be served by the local copy and we pay the storage cost which is optimal. After that, $s_j$ will renew the copy and hold it for another period of $\trfrcost$ long. If no request arises at $s_j$ during the period of $\trfrcost$, $s_j$ will delete the copy and an incoming transfer will then be required to serve the next request at $s_j$. In this case, the total cost paid is $2\cdot\trfrcost$, while the optimal cost is a transfer cost $\trfrcost$ (by not holding a copy in $s_j$).
On the other hand, with perfect predictions of inter-request times, we can always achieve the optimal cost. That is, if the inter-request time between two consecutive requests at a server $s_j$ is no more than $\trfrcost$, we keep a data copy in $s_j$ between the requests and pay the storage cost which is optimal. Otherwise, we do not keep a copy in $s_j$ and pay the transfer cost for serving the latter request which is optimal.  
Nevertheless, this solution does not have bounded robustness, because (1) if the inter-request time is predicted to be longer than $\trfrcost$ but is actually shorter than $\trfrcost$, the transfer cost to pay is fixed while the optimal cost can be close to $0$; and (2) if the inter-request time is predicted to be shorter than $\trfrcost$ but is actually longer than $\trfrcost$, the storage cost can go towards infinity in the worst case. 
Furthermore, if we simply apply the above ideas (with or without predictions) to every server, it may not meet the requirement of maintaining at least one data copy at any time, because all copies will be deleted after a sufficiently long silent period without any request.

Our approach to algorithm design is to integrate and balance between following predictions and not using predictions. We introduce a hyper-parameter $\hyperpar\in (0, 1]$ to indicate the level of distrust in the predictions. $\hyperpar \rightarrow 0$ indicates nearly full trust and reliance on the predictions, while $\hyperpar \rightarrow 1$  indicates almost no trust and not using the predictions. If the next request at a server is predicted to arise beyond a period of $\trfrcost$ from the current request, we let the server hold a copy for a period of $\hyperpar\cdot\trfrcost$ instead of deleting the copy immediately. Then, even if the prediction is not correct, the next request still has some chance to be served by the local copy, avoiding the transfer. This will save the cost when the next request actually arises quite soon and thus improve robustness while keeping the loss in consistency under control. 
On the other hand, if the next request at a server is predicted to arise within a period of $\trfrcost$ from the current request, we let the server hold a copy for a period of $\trfrcost$ rather than up to
the next request. Then, even if the prediction is not correct, the storage cost will not increase infinitely, thereby enhancing robustness while still ensuring consistency. In addition, to meet the at-least-one-copy requirement, the copy whose period ends the latest according to the above strategies will continue to be kept beyond its period.

Algorithm \ref{alg3} shows the details of our proposed algorithm.
We use $E_j$ to denote the intended expiration time of the data copy in server $s_j$,
use $K_j$ to denote a binary tag indicating whether server $s_j$ keeps a copy beyond the intended expiration time,
and use $c$ to denote the number of servers holding data copies. Initially, there is only one data copy in server $s_1$, so $c=1$ (line 2).

\begin{algorithm}[t]
  \caption{Dynamic Replication with Predictions}
  \label{alg3}
  \begin{algorithmic}[1]
  \State /* a data copy is initially stored in server $s_{1}$ */
  \State
  $\textbf{initialize:}$ $c \gets 1$; $E_1 \gets \trfrcost$ or $\hyperpar\cdot\trfrcost$ ; $E_j \gets -\infty$ for all $2 \leq j \leq n$; $K_j \gets 0$ for all $1 \leq j \leq n$; \Comment{$s_1$ holds a data copy}
  \While {(a request $r_i$ arises at server $s_{j}$ at time $t_{i}$)}
    \If {$t_i \leq E_j$ or $K_j = 1$} \Comment{$s_j$ holds a data copy}
        \State serve $r_{i}$ by the local copy in $s_{j}$;
    \Else
        \State serve $r_{i}$ by a transfer from any other server with a copy;
        \State create a copy in $s_j$;
        \State $c \gets c+1$;
    \EndIf
    \If {the prediction forecasts the next request at $s_j$ will arise no later than time $t_i + \trfrcost$}
        \State $E_j \gets t_{i}+\trfrcost$;
    \Else
        \State $E_j \gets t_{i}+\hyperpar\cdot\trfrcost$;
    \EndIf
    \State $K_j \gets 0$;
  \EndWhile
  \While {($s_{j}$ transfers the data object to another server $s_{k}$ at time $t$)}
    \If {$K_j = 1$} \Comment{$s_{j}$ holds the only copy} 
        \State drop the copy in $s_{j}$;
        \State $K_j \gets 0$;
        \State $c \gets c-1$;
    \EndIf
  \EndWhile
  \While {(a copy expires in server $s_{j}$ at time $E_j$)}
    \If {$c=1$} \Comment{$s_{j}$ holds the only copy}
        \State $K_j \gets 1$;
    \Else
        \State drop the copy in $s_{j}$;
        \State $c \gets c-1$;
    \EndIf
  \EndWhile
 \end{algorithmic}
\end{algorithm}

When a request arises at a server $s_j$, if $s_j$ holds a data copy, the request is served locally (lines 4-5). Otherwise, the request is served by a transfer from another server with a copy (lines 6-9). After serving the request, $s_j$ keeps the data copy for an {\em intended duration} based on the prediction of the following inter-request time. If the prediction forecasts the next request at $s_j$ to arise within a period of $\trfrcost$, $s_j$ keeps the copy for an intended duration of $\trfrcost$. Otherwise, $s_j$ keeps the copy for an intended duration of $\hyperpar\cdot\trfrcost$ (lines 10-13). In either case, the tag $K_j$ of $s_j$ is cleared (line 14).\footnote{
With the dummy request $r_0$ arising at server $s_1$ at time $0$, $s_1$ sets an intended duration $\trfrcost$ or $\hyperpar\cdot\trfrcost$ for the initial copy according to the prediction of the first request arrival (line 2).}

In the intended duration, if a new request arises at $s_j$, the request is served by the local copy and $s_j$ renews the copy for another intended duration based on the new prediction of the subsequent inter-request time.
When the intended duration of the copy in $s_{j}$ expires (i.e., no request arises at $s_j$), if $s_{j}$ does not hold the only copy in the system, it drops the copy (lines 23-25). Otherwise, if $s_{j}$ holds the only copy, it continues to keep the copy by setting its tag $K_j$ (lines 21-22) until the next request in the system. If the next request arises at $s_j$, $s_j$ sets a new intended duration of the copy according to the prediction and clears its tag (lines 10-14). If the next request arises at another server $s_k$ ($s_k \neq s_j$), a transfer has to be performed to $s_k$ to serve the request. In this case, $s_j$ drops its copy right after the transfer and clears its tag (lines 15-19). Meanwhile, a copy will be created at $s_k$ upon receiving the transfer (lines 8-9), so the requirement of maintaining at least one copy is met.

\begin{figure}[htbp]
\centering
\includegraphics[width=9cm]{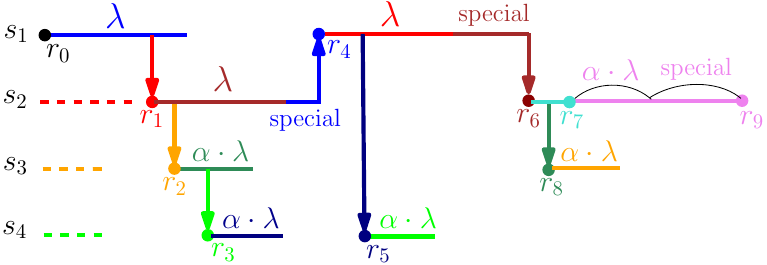}
\caption{An example of our online algorithm}
\label{fig:costalloc}
\end{figure}

To facilitate algorithm analysis, we refer to a data copy held within the intended duration derived from the prediction as a \textit{regular copy}, and a data copy held beyond the intended duration as a \textit{special copy}.
Figure \ref{fig:costalloc} shows an example produced by our algorithm, where horizontal edges represent data copies and vertical edges represent transfers.
By the algorithm definition, since a \textit{special copy} is the only copy in the system, it is easy to infer the following property.

\begin{Proposition}
The storage periods of any two special copies do not overlap. Moreover, the storage period of any special copy does not overlap with that of any regular copy.
\label{pro-3}
\end{Proposition}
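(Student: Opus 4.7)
The plan is to show that a special copy is, by definition of Algorithm \ref{alg3}, the unique copy in the system throughout its entire special storage period; the proposition then follows as an immediate corollary. First, I would observe that a copy at $s_j$ only ever receives the tag $K_j \gets 1$ through lines 21--22, which are executed exclusively when its intended duration expires under the condition $c = 1$. Hence, at the instant a special period begins, the special copy is already the only copy in the system.

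Next, I would argue by case analysis that $c$ stays equal to $1$ throughout the special period. The counter $c$ is incremented only on line 9, which is triggered by a request arriving at a server that does not currently hold a copy, and is decremented only on lines 19 and 25, which are triggered by an outgoing transfer from the special server or by the expiration of some other copy. During a special period, no other copy exists (so no other expiration is possible), and any request arriving anywhere either (i) lands at $s_j$ itself, executing lines 10--14 to reset $K_j \gets 0$ and start a fresh regular intended duration, or (ii) lands at some other server $s_k$, triggering a transfer from $s_j$ to $s_k$ that on lines 18--20 drops the copy at $s_j$ and clears $K_j$. In either case, the special status terminates at that very instant. Consequently, for the entire duration in which $K_j = 1$, we have $c = 1$, and $s_j$ is the only server holding a copy.

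Both halves of the proposition are now direct. If two special copies had overlapping storage periods, then at any common time $t$ both servers would simultaneously satisfy $K = 1$ and hence $c = 1$, which is a contradiction since they are distinct copies. Likewise, if a special copy and a regular copy overlapped, then at a common time $t$ we would have $c = 1$ (from the special copy), yet at least two distinct servers would hold copies, again a contradiction.

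The only subtlety I would anticipate is at the boundary instants where a special period ends exactly when a new copy is created elsewhere (case (ii) above: the transfer to $s_k$ creates a copy on line 8 at the same time $t$ as $s_j$ drops its copy on line 19), or where a local request at $s_j$ ends the special period and begins a regular one. I would handle this by adopting the standard convention that a storage period is the half-open interval $[t_{\text{start}}, t_{\text{end}})$, so that sharing an endpoint does not constitute an overlap; under this convention the analysis above yields the claim without further complications.
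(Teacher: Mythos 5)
Your proof is correct and follows the same reasoning the paper relies on: the paper simply notes that a special copy is by construction the only copy in the system throughout its storage period and treats the non-overlap claims as immediate, which is exactly the argument you make precise by tracking $c$ and $K_j$ through the algorithm. Your additional care about boundary instants and half-open intervals is a reasonable elaboration but does not change the substance.
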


\section{Preliminaries for Analysis}

Our general approach to competitive analysis is to first divide a request sequence into partitions based on the characteristics of an optimal offline strategy, and then study the costs of Algorithm \ref{alg3} and the optimal strategy for each partition separately. To this end, we need to allocate the cost produced by Algorithm \ref{alg3} (referred to as the {\em online cost} for short) to individual requests and characterize an optimal offline strategy to prepare for the analysis.

\subsection{Allocation of Online Cost}
\label{onlinealloc}
For each request $r_i$, we define $r_{p(i)}$ as the preceding request of $r_i$ arising at the same server,\footnote{For example, in Figure \ref{fig:costalloc}, $p(6) = 1$ since $r_1$
and $r_6$ arise in succession at server $s_2$.} and $l_i$ as the intended duration of the regular copy in server $s[r_i]$ after $r_{p(i)}$. By the algorithm definition, $l_i$ equals either $\hyperpar\cdot\trfrcost$ or $\trfrcost$. As illustrated in Figure \ref{type1234}, we categorize all the requests in the sequence into four types based on how they are served by our online algorithm.

\begin{figure}[htbp]
\centering
\includegraphics[width=12cm]{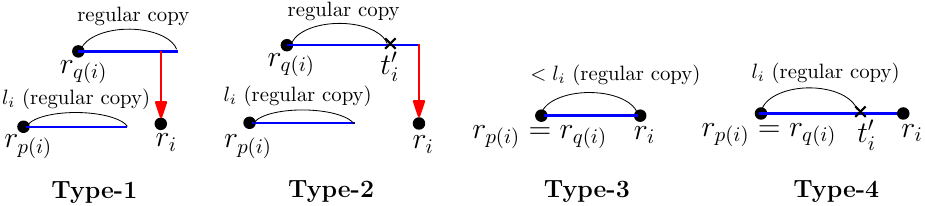}
\caption{Different request types in our online algorithm}
\label{type1234}
\end{figure}

If a request $r_i$ is served by a transfer from another server, by the algorithm definition, that server must keep a data copy since its most recent request before $r_i$. We denote this most recent request by $r_{q(i)}$ and its server by $s[r_{q(i)}]$. At the time of $r_i$, if the copy held by $s[r_{q(i)}]$ is a regular copy, $r_i$ is called a \textbf{Type-1} request. If the copy held by $s[r_{q(i)}]$ is a special copy, $r_i$ is called a \textbf{Type-2} request. For each \textbf{Type-2} request $r_{i}$, we define $t'_{i}$ as the time instant when the copy in $s[r_{q(i)}]$ switches from a regular copy to a special copy.

If a request $r_i$ is served by a local copy, by the algorithm definition, the copy must be kept in $s[r_i]$ since the preceding request $r_{p(i)}$ at $s[r_i]$. If the copy is a regular copy at $r_i$'s arrival, $r_i$ is called a \textbf{Type-3} request. If the copy is a special copy at $r_i$'s arrival, $r_i$ is called a \textbf{Type-4} request.

For notational convenience, for each \textbf{Type-3/4} request $r_{i}$, we define $r_{q(i)} = r_{p(i)}$. Then, $r_{q(i)}$ uniformly denotes the earlier request that provides the data copy to serve $r_i$ for all request types. For each \textbf{Type-4} request $r_i$, we also define $t'_{i}$ as the time instant when the copy in $s[r_{i}]$ switches from a regular copy to a special copy.

For each \textbf{Type-1/2} request $r_i$,
the cost allocated to $r_{i}$ includes: (1) the transfer cost of $\trfrcost$ for serving $r_{i}$; (2) the storage cost of the special copy in server $s[r_{q(i)}]$ over the period $(t'_{i},t_{i})$ if $r_{i}$ is a \textbf{Type-2} request; (3) the storage cost of the regular copy in server $s[r_{i}]$ after $r_{p(i)}$, which is given by $l_i$.\footnote{Note that it is possible for the copy in server $s[r_{i}]$ to switch from a regular copy to a special copy (i.e., extend beyond the $l_i$ period). In this case, the storage cost of the special copy is allocated to a request at another server served by a transfer from $s[r_{i}]$ according to the aforementioned (2).}

For each \textbf{Type-3/4} request $r_i$, we allocate the storage cost of the copy in server $s[r_{i}]$ during the period $(t_{p(i)},t_{i})$ to $r_i$, i.e., $r_i$ is allocated a cost of $t_{i}-t_{p(i)}$.

The cost allocation can be summarized as follows.
\begin{Proposition}
\label{costsummary}
The online cost allocated to a request $r_i$, based on its type, is given by
\begin{itemize}
\item \textbf{Type-1:}
$l_i+\trfrcost$;
\item \textbf{Type-2:}
$(t_{i}-t'_{i})+l_i+\trfrcost$;
\item \textbf{Type-3:}
$t_{i}-t_{p(i)}$;
\item \textbf{Type-4:}
$t_{i}-t_{p(i)}=(t_{i}-t'_{i})+l_i$.
\end{itemize}
\end{Proposition}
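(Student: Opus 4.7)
My plan is to prove this proposition by unpacking the cost-allocation rules stated immediately before it and, in the Type-4 case, verifying a single algebraic identity. The statement is essentially bookkeeping: the four cases just compile what the preceding paragraphs already say.

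First, for each request type I will list the items the allocation rules assign to $r_i$ and sum them. For Type-1, $r_i$ is allocated the transfer cost $\trfrcost$ and the storage $l_i$ of the regular copy started at $s[r_i]$ after $r_{p(i)}$; the special-copy term from item (2) is absent because $s[r_{q(i)}]$ holds a regular copy at $r_i$'s arrival. For Type-2 the same two items appear, together with $t_i - t'_i$ for the special copy at $s[r_{q(i)}]$ over $(t'_i, t_i)$, yielding $(t_i - t'_i) + l_i + \trfrcost$. For Type-3 the only allocated item is the storage of the local copy at $s[r_i]$ over $(t_{p(i)}, t_i)$, equal to $t_i - t_{p(i)}$; the Type-4 allocation is identical in form.

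The only non-trivial step is the Type-4 identity $t_i - t_{p(i)} = (t_i - t'_i) + l_i$. For this I will argue that $t'_i = t_{p(i)} + l_i$. By the algorithm, after serving $r_{p(i)}$ server $s[r_i]$ keeps a regular copy for the intended duration $l_i$ (set in lines 10--13); since $r_i$ is Type-4, no other request arises at $s[r_i]$ during $(t_{p(i)}, t_i)$, so the regular copy is never renewed before the intended expiration at $t_{p(i)} + l_i$. Moreover, because $r_i$ is served by a local special copy, $s[r_i]$ must be the unique copy-holder at the moment of expiration (otherwise lines 23--25 would drop the copy and $r_i$ would instead be Type-1 or Type-2). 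Hence the copy switches from regular to special precisely at $t'_i = t_{p(i)} + l_i$, and substitution gives the identity.

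I do not expect a real obstacle; the hardest part is making the Type-4 justification precise, namely tying $t'_i = t_{p(i)} + l_i$ to the explicit lines of Algorithm~\ref{alg3} that set the intended expiration and the tag $K_{s[r_i]}$. Everything else follows by inspection of the allocation rules.
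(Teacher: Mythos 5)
Your proposal is correct and matches the paper's treatment: the paper offers no separate proof, presenting the proposition as a direct summary of the allocation rules stated just before it, which is exactly your reading. Your justification of the Type-4 identity via $t'_i = t_{p(i)} + l_i$ (the regular copy expires at its intended expiration and switches to special only when it is the sole copy) is the right and only substantive point, and it agrees with the algorithm's definition.
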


In the example of Figure \ref{fig:costalloc}, each request and its allocated cost are shown in the same color. $r_{5}$ and $r_{8}$ are \textbf{Type-1} requests, $r_{4}$ and $r_{6}$ are \textbf{Type-2} requests, $r_{7}$ is a \textbf{Type-3} request, and $r_{9}$ is a \textbf{Type-4} request.

There are some special considerations to note in the cost allocation. Given a request sequence $\langle r_{1}, r_{2},...,r_{m}\rangle$, after serving the final request $r_m$, a regular copy is created in server $s[r_m]$. Among this regular copy and all other regular copies in the system after $r_m$, the copy expiring the latest would switch to a special copy and stay infinitely.
In our analysis, we shall not account for the cost of the regular copy created in server $s[r_m]$ after $r_m$ and the special copy to stay infinitely. The rationale is that the storage periods of these two copies do not overlap and they are both beyond $r_m$. These two copies are considered to be in existence for ensuring that there is at least one copy in the system beyond $r_m$. In an optimal offline strategy, no copy will need to be stored beyond $r_m$. To focus on the cost in a finite time horizon, we do not account for the cost of the aforesaid two copies by the online algorithm.\footnote{If we insist in considering these two copies, their storage costs can be bounded by the storage cost of a data copy in any server beyond $r_m$ in the optimal offline strategy. Hence, it would not affect the correctness of our competitive analysis.}

By the above assumption, if there are $n$ servers ever holding copies in serving a request sequence, there will be $n-1$ regular copies after the last request at each server (except the server where $r_m$ arises), whose costs have not been allocated. Note that the first request at each server (except $s_{1}$ with the initial copy) must be served by a transfer because no copy was stored in the server. Since there are $n-1$ such first requests in total, we allocate the storage costs of the aforementioned $n-1$ regular copies to these $n-1$ first requests (one regular copy for each request). In this way, the first request $r_{j}$ of each server (except $s_1$) is allocated a total cost of $l+\trfrcost$ if it is served by a transfer from a regular copy (\textbf{Type-1} request), or a total cost of $(t_{j}-t'_{j}) + l + \trfrcost$ if it is served by a transfer from a special copy (\textbf{Type-2} request), where $l$ is the intended duration of the regular copy after the last request at some server). This is then consistent with the cost allocations of other \textbf{Type-1/2} requests as given in Proposition \ref{costsummary}.

In the example of Figure \ref{fig:costalloc}, $r_{1}$, $r_{2}$ and $r_{3}$ are the first requests arising at servers $s_{2}$, $s_{3}$ and $s_{4}$ respectively, and they are all \textbf{Type-1} requests. We do not account for the regular copy after the final request $r_9$ which arises at $s_2$. There are three regular copies after the last requests at the other servers (i.e., after $r_4$, $r_5$ and $r_8$). Their storage cost of $\trfrcost + 2\cdot\trfrcost\cdot\hyperpar$ is allocated to $r_{1}$, $r_{2}$ and $r_{3}$, with the cost of one regular copy for each of them (dashed lines).

It is easy to verify that the sum of the costs allocated to all requests is equal to the total online cost and the dummy request $r_0$ is not allocated any cost.

\subsection{Characteristics of An Optimal Strategy}
Deriving the exact form of an optimal offline strategy
for a request sequence is not straightforward. In this section, we present some characteristics of an optimal replication strategy to facilitate our competitive analysis.
The proofs of the following propositions are given in the appendix.

\begin{Proposition}
There exists an optimal replication strategy in which for each transfer, there is a request at either the source server or the destination server of the transfer.
\label{pro-1}
\end{Proposition}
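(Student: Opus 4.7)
The plan is an exchange argument showing that any optimal strategy OPT can be transformed, without increasing cost, into one in which every transfer coincides with a request at either its source or destination. I would focus on a single ``misaligned'' transfer $\tau$ from server $A$ to server $B$ at time $t$ such that neither $A$ nor $B$ has a request at time $t$, and examine the storage period $[t, t_B^{\mathrm{end}}]$ of the copy this transfer creates at $B$. Let $t^{\star}$ be the earliest ``use'' of this copy after time $t$: a local request at $B$, an outgoing transfer sourced from $B$, or simply the drop of the copy at $t_B^{\mathrm{end}}$ with no intervening event.

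I would then handle the three cases separately. If the earliest use is merely a drop, then $\tau$ together with $B$'s copy over $[t, t_B^{\mathrm{end}}]$ contributes cost without benefit, so removing both strictly reduces cost. If the earliest use is a local request at $B$ at time $t^{\star}$, I would delay $\tau$ to $t^{\star}$ and shorten $B$'s copy to $[t^{\star}, t_B^{\mathrm{end}}]$; to preserve the invariant that some copy always exists and to provide a valid source for the delayed transfer, I would extend $A$'s copy up to $t^{\star}$ whenever $A$'s copy in OPT does not already reach $t^{\star}$. The extra storage at $A$ is at most $t^{\star} - t$, which is fully offset by the storage saved at $B$, and $\tau$ now coincides with a request at its destination. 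If the earliest use is an outgoing transfer from $B$ to some server $C$ at time $t^{\star}$, I would bypass $B$ by replacing the pair $A \to B$ at time $t$ and $B \to C$ at time $t^{\star}$ with a single direct transfer to $C$ at $t^{\star}$ (extending source storage as needed), saving at least $\trfrcost$.

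Iterating this transformation across all misaligned transfers preserves optimality and terminates, since each application either strictly decreases the number of misaligned transfers (local-request case) or strictly decreases the total number of transfers (bypass case), both of which are bounded nonnegative integers. The main obstacle is the bypass case when $B$'s copy continues to be useful after $t^{\star}$: removing $B$'s copy during $[t, t^{\star}]$ may necessitate reintroducing it later via a new incoming transfer, and the cost accounting must verify that the saved transfer cost plus the saved storage at $B$ over $[t, t^{\star}]$ outweigh any additional transfer. The iterative framework handles this naturally by re-applying the three-case analysis to any newly introduced misaligned transfer, and the monotone transfer count guarantees termination.
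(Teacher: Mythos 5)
Your overall strategy (an exchange argument that repeatedly realigns misaligned transfers) is the same as the paper's, but you anchor each transfer \emph{forward} to the destination's next local request, whereas the paper anchors it \emph{backward} to the source's last local request. Your drop case and local-request case are essentially sound (modulo checking that removing $B$'s copy never violates the at-least-one-copy requirement, which your extension of $A$'s copy handles). The problem is the bypass case, and it is precisely the sub-case you flag as ``the main obstacle'' that breaks your argument. Note first that the \emph{clean} bypass sub-case (where $B$'s copy can be deleted outright) strictly saves $\trfrcost$ and therefore cannot occur in an optimal strategy at all; the only bypass configuration that can actually survive in OPT is the one where $B$'s copy is still needed after $t^{\star}$. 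There, reintroducing $B$'s copy at $t^{\star}$ costs a new transfer of $\trfrcost$ that exactly cancels the $\trfrcost$ you saved by merging $A\to B$ and $B\to C$, the new incoming transfer to $B$ at $t^{\star}$ need not coincide with any request and so may itself be misaligned, and consequently neither the total number of transfers nor the number of misaligned transfers strictly decreases. Your stated termination certificate (``the monotone transfer count guarantees termination'') is therefore false in the one sub-case that matters; a correct argument would need a different potential (e.g., that each such step pushes the creation time of $B$'s copy strictly past another event), and you do not supply one.

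The cleaner fix, which also shows how the two routes differ, is to never bypass at an intermediate outgoing transfer: define $t^{\star}$ directly as the first \emph{local request} at $B$ served by the copy (or its drop time if there is none), and reroute every outgoing transfer from $B$ during $(t, t^{\star})$ to originate from $A$'s extended copy instead, so that $B$'s copy over $[t,t^{\star})$ can be deleted and the incoming transfer delayed to $t^{\star}$. This makes the local-request case subsume the bypass case. The paper does the mirror image: it first normalizes so that the source holds a copy before the transfer (collapsing zero-duration relay chains at time $t$, a degenerate situation your write-up does not explicitly address) and the destination holds a copy after it; then it rules out the case where the source retains its copy via an $\epsilon$-delay that strictly saves storage at the destination; and in the remaining case it pulls the transfer back to the source's last local request $t^{*}$, swapping the storage intervals and rerouting transfers that originated from the source during $(t^{*},t)$ to originate from the destination. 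Either orientation works, but the forward version requires the rerouting step you omitted.
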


The main idea to prove the above proposition is that if there is no request at the source and destination servers, we can always advance or delay the transfer to save or maintain the total cost.

\begin{Proposition}
There exists an optimal replication strategy with the characteristic in Proposition \ref{pro-1} and that for each request $r_i$,
if $r_i$ is served by a local copy, the copy is created no later than $r_{p(i)}$.
\label{pro-2}
\end{Proposition}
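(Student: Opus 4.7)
I would take an optimal strategy $\mathcal{S}^*$ satisfying Proposition \ref{pro-1} and iteratively eliminate violations of the new property one at a time, preserving optimality. Suppose some request $r_i$ is served by a local copy whose creation time $\tau$ in the current strategy satisfies $\tau > t_{p(i)}$. Because no request arises at $s[r_i]$ during the open interval $(t_{p(i)}, t_i)$ by the definition of $r_{p(i)}$, Proposition \ref{pro-1} forces the transfer at time $\tau$ that creates this copy to be associated with some request $r_k$ at its source server $s'$. Let $\rho \geq \tau$ denote the time at which $s'$ drops its copy in $\mathcal{S}^*$.

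The modification I would apply deletes the copy in $s[r_i]$ throughout $[\tau, t_i)$, cancels the transfer at time $\tau$, extends the storage of the copy in $s'$ from $\rho$ up to $t_i$ when $\rho < t_i$, and introduces a new transfer from $s'$ to $s[r_i]$ at time $t_i$, so that $r_i$ is now served by a transfer rather than by a local copy. Any transfer that originally originated at $s[r_i]$ during $(\tau, t_i)$ must, by Proposition \ref{pro-1} together with the absence of requests at $s[r_i]$ in this interval, have been tied to a request at its destination; I would re-route each such transfer to source from $s'$, which is feasible because $s'$ retains a copy throughout $[\tau, t_i]$ after the extension.

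Next I would verify feasibility and cost. The at-least-one-copy requirement is preserved because $s'$ holds a copy throughout $[\tau, t_i]$; Proposition \ref{pro-1} is preserved because the newly added transfer at $t_i$ is tied to $r_i$ at the destination and every re-routed transfer keeps its destination-side association. The net change in cost equals $(t_i - \rho)^{+} - (t_i - \tau)$, where $(x)^{+} = \max\{0, x\}$: storage at $s[r_i]$ decreases by $t_i - \tau$, storage at $s'$ increases by at most $(t_i - \rho)^{+}$, and the total number of transfers is unchanged. Since $\rho \geq \tau$, this quantity is non-positive, so by the optimality of $\mathcal{S}^*$ the modified strategy is also optimal. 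In the modified strategy $r_i$ is served by a transfer, so the condition in Proposition \ref{pro-2} is vacuous at $r_i$, and a short check shows that no new violation appears elsewhere (in particular, any request at $s'$ that becomes served locally due to the extended copy is harmless, because the copy in $s'$ was present since well before $\tau$). Iterating over the finitely many violations yields an optimal strategy satisfying both Proposition \ref{pro-1} and Proposition \ref{pro-2}.

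The main obstacle is the bookkeeping of side effects of the modification: checking that re-routing the intermediate transfers preserves Proposition \ref{pro-1}, that the at-least-one-copy constraint is never broken in $[\tau, t_i]$, and that requests at $s'$ whose service mode may change under the extension of $s'$'s copy do not themselves introduce fresh violations of the new property.
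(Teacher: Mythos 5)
Your proposal is correct and follows essentially the same route as the paper's proof: since no request arises at $s[r_i]$ in $(t_{p(i)}, t_i)$, Proposition \ref{pro-1} places a request at the source server of the copy-creating transfer, and the fix is to shift the storage from $s[r_i]$ to that source server and delay the transfer to time $t_i$, so that $r_i$ becomes transfer-served at no extra cost. Your accounting via the drop time $\rho$ and the check that newly locally-served requests at the source do not create fresh violations is slightly more explicit than the paper's, but the underlying argument is the same.
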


The following feature says that if two successive requests at the same server are sufficiently close in time, the server should hold a copy between them.

\begin{Proposition} \label{prop5}
There exists an optimal replication strategy with the characteristics in Propositions \ref{pro-1}, \ref{pro-2} and that for each request $r_i$, if $t_i - t_{p(i)} \leq \trfrcost$, server $s[r_i]$ holds a copy throughout the period $(t_{p(i)}, t_i)$, so that $r_i$ is served by a local copy.
\end{Proposition}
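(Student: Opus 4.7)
The approach is an exchange argument. I would start with any optimal strategy $OPT$ that already satisfies Propositions \ref{pro-1} and \ref{pro-2} (guaranteed to exist by those propositions), and then iteratively repair every request $r_i$ with $t_i - t_{p(i)} \leq \trfrcost$ that violates the new property, in each step preserving optimality together with Propositions \ref{pro-1} and \ref{pro-2}.

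The first observation pins down the structure of a violating $r_i$. If such an $r_i$ were served by a local copy in $OPT$, Proposition \ref{pro-2} would force that copy to have been created no later than $r_{p(i)}$; since a copy persists until explicitly dropped, it would cover all of $(t_{p(i)}, t_i)$, and the property would already hold. Hence any violating $r_i$ must be served by a transfer in $OPT$, and the gap in coverage at $s[r_i]$ lies entirely inside $(t_{p(i)}, t_i)$. I would then modify $OPT$ into $OPT'$ by filling in every sub-interval of $(t_{p(i)}, t_i)$ during which $s[r_i]$ has no copy, so $s[r_i]$ holds a copy throughout the whole interval, and by deleting the transfer that $OPT$ used to serve $r_i$, letting $r_i$ be served locally. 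No new transfer is needed to initiate the filled-in copy: if $r_{p(i)}$ was served locally, the copy is already present at $t_{p(i)}$; if $r_{p(i)}$ was served by a transfer into $s[r_i]$, the object arrives at $s[r_i]$ at $t_{p(i)}$ and can simply be retained at zero extra transfer cost. The added storage cost is at most the interval length $t_i - t_{p(i)} \leq \trfrcost$, while the removed transfer cost is exactly $\trfrcost$, so the total cost does not increase. Iterating this repair over the finitely many violating requests produces the desired strategy.

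The main obstacle is verifying that Propositions \ref{pro-1} and \ref{pro-2} and the at-least-one-copy invariant survive each repair. The invariant is immediate since the repair only adds storage and never drops a copy. For Proposition \ref{pro-1}, no new transfer is introduced and only the one serving $r_i$ is removed, so every surviving transfer retains its original source and destination and still has a request at one of its endpoints. For Proposition \ref{pro-2}, the only request whose serving copy changes is $r_i$ itself, and its new serving copy exists from $t_{p(i)}$ onward by construction; every other local service is untouched. The most delicate point is confirming that filling the gap at $s[r_i]$ does not clash with other copy or transfer decisions in $OPT$; this holds because there are no requests at $s[r_i]$ strictly between $t_{p(i)}$ and $t_i$, and the modification changes no state at any other server.
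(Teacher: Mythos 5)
Your proposal is correct and follows essentially the same exchange argument as the paper: a violating $r_i$ must be served by a transfer arriving during $(t_{p(i)}, t_i)$, and replacing that transfer with continuous storage at $s[r_i]$ over the interval costs at most $t_i - t_{p(i)} \leq \trfrcost$, so optimality and the earlier characteristics are preserved. Your write-up is simply more explicit than the paper's about why no new transfer is needed at $t_{p(i)}$ and why Propositions \ref{pro-1} and \ref{pro-2} survive the repair.
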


If a data copy is consistently stored in a server before and after a time instant $t$, we say that this copy {\em crosses} time $t$.

\begin{Proposition}
There exists an optimal replication strategy with the characteristics in Propositions \ref{pro-1}, \ref{pro-2}, \ref{prop5} and that for each request $r_i$, if $r_i$ is served by a transfer and no server holds a copy crossing the time $t_i$ of $r_i$, then (i) $r_{i-1}$ and $r_i$ arise at different servers; and (ii) the source server of the transfer is $s[r_{i-1}]$ which keeps a copy since $t_{i-1}$.
\label{prop-add}
\end{Proposition}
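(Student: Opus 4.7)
The plan is to start from an optimal strategy $OPT$ already satisfying Propositions \ref{pro-1}, \ref{pro-2}, and \ref{prop5}, and repair any violation of the new property at a specific request $r_i$ by a modification confined to the window $[t_{i-1}, t_i]$. The repair will either strictly reduce the cost (which contradicts optimality and thus rules the violation out) or preserve the cost while placing the transfer source at $s[r_{i-1}]$. The pivotal structural fact is an immediate consequence of Proposition \ref{pro-1}: since no request lies in the open interval $(t_{i-1}, t_i)$, no transfer can happen there, so the set of servers holding copies can only shrink during this interval. In particular, any server holding a copy just before $t_i$ must have held it continuously since $t_{i-1}$.

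For part (i), I would argue by contradiction, assuming $s[r_{i-1}] = s[r_i] =: s^*$. A copy exists at $s^*$ at time $t_{i-1}$ (needed to serve $r_{i-1}$ either locally or via the incoming transfer that just arrived). Since $r_i$ is served by a transfer, $s^*$ cannot hold a copy just before $t_i$ (otherwise the transfer would be pointless and $OPT$ suboptimal), so this copy is dropped at some $\tau \in [t_{i-1}, t_i)$. The at-least-one-copy constraint together with the shrinking observation then forces another server $s'$ to hold a copy throughout $[t_{i-1}, t_i]$, and $s'$ is the transfer source. I then build the alternative: keep $s^*$'s copy until $t_i$ (so $r_i$ is served locally and no transfer is needed) and shorten $s'$'s copy to end at $\tau$. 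The extra storage at $s^*$ equals exactly the storage saved at $s'$, but the transfer cost $\trfrcost$ vanishes, which strictly beats $OPT$.

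For part (ii), assume $s[r_{i-1}] \neq s[r_i]$ and let $s'$ denote the source of the transfer in $OPT$. By the shrinking observation, $s'$ holds a copy throughout $[t_{i-1}, t_i]$. If $s' = s[r_{i-1}]$ the property already holds. Otherwise $s[r_{i-1}]$ too holds a copy at time $t_{i-1}$; let $\tau \in [t_{i-1}, t_i]$ be the instant at which $OPT$ drops it. I modify $OPT$ as follows: extend $s[r_{i-1}]$'s copy to cover all of $[t_{i-1}, t_i]$ (dropping it at $t_i$ so the premise that no copy crosses $t_i$ is preserved), reroute the transfer at $t_i$ to originate from $s[r_{i-1}]$, and shorten $s'$'s copy to terminate at $\tau$ instead of $t_i$. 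The storage at $s[r_{i-1}]$ grows by $t_i - \tau$ and the storage at $s'$ drops by the same amount, the transfer cost is unchanged, and a copy at $s[r_{i-1}]$ is present throughout the window. Hence the modified strategy has the same cost, is still optimal, and satisfies property (ii).

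The main subtlety to watch is the at-least-one-copy invariant during each swap; this is really the only thing that could break. It is automatic here because in part (i) a copy always lives on $\{s^*, s'\}$, and in part (ii) on $\{s[r_{i-1}], s'\}$, at every instant of the modified window. Since each modification touches only decisions inside $[t_{i-1}, t_i]$ and leaves the transfer at $t_i$ (if any) still in place, Propositions \ref{pro-1}, \ref{pro-2}, and \ref{prop5} continue to hold, and applying the repair independently to each offending $r_i$ yields the desired optimal strategy.
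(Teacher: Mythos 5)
Your proof is correct and follows essentially the same route as the paper's: both rule out $s[r_{i-1}]=s[r_i]$ via a storage swap that eliminates the transfer and contradicts optimality, and otherwise perform a cost-neutral swap that relocates the stored copy to $s[r_{i-1}]$ and reroutes the transfer, all confined to the request-free window $(t_{i-1},t_i)$. The only cosmetic difference is that you justify the source server's copy spanning the whole window directly from the ``no transfers between consecutive requests'' consequence of Proposition \ref{pro-1}, whereas the paper reaches the same fact by arguing that the source's copy must trace back to its last local request.
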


Hereafter, an optimal offline strategy shall always refer to one with the characteristics stated in Propositions \ref{pro-1}, \ref{pro-2}, \ref{prop5} and \ref{prop-add}.

\section{Division Approach for Analysis}
\label{sec:preparation}

We adopt a division approach to analyze the robustness and consistency of our algorithm.
Consider an optimal offline strategy of a request sequence $\left\langle r_{1},r_{2},\dots,r_{m} \right\rangle$. Remember that each request $r_i$ arises at server $s[r_i]$. For each request $r_i$, we examine if any server other than $s[r_i]$ holds a copy crossing the time $t_i$ of $r_i$.
We identify all such requests $r_i$'s where no other server holds a copy crossing $t_i$ and divide the request sequence into partitions at these requests. Note that the dummy request $r_0$ and the final request $r_m$ naturally meet the criterion.

Let
$\left\langle r_{d},r_{d+1},\dots,r_{e} \right\rangle$ denote a partition of requests, where $0 \leq d < e \leq m$.
That is, no copy at any other server crosses $t_d$, no copy at any other server crosses $t_e$, and for each request $r_i$ where $d < i < e$, there exists a copy at some other server crossing $t_i$.
We use $\textbf{Online}(d,e)$ to denote the total online cost allocated to the requests $r_{d+1},r_{d+2},\dots,r_{e}$ by the allocation method in Section \ref{onlinealloc}, and use $\textbf{OPT}(d,e)$ to denote the total cost of storage and transfers incurred over the period $(t_d, t_e]$ in the optimal offline strategy.
In the analysis of robustness (resp. consistency), we shall bound the ratio $\frac{\textbf{Online}(d,e)}{\textbf{OPT}(d,e)}$ by $1 + \frac{1}{\hyperpar}$ (resp. $\frac{5+\hyperpar}{3}$) for each partition. Then, aggregating over all the partitions, the ratio between the online cost and the optimal offline cost of the entire request sequence is bounded by the same constant.

To analyze $\textbf{OPT}(d,e)$, in the optimal offline strategy, we shall pick a set of storage periods of data copies to cover the time span from $t_d$ to $t_e$.

\medskip
\noindent \textbf{Case A: Request $r_e$ is served by a transfer.}
By Proposition \ref{prop-add}, in this case, $r_{i-1}$ and $r_{i}$ arise at different servers, and server $s[r_{e-1}]$ holds a data copy during the period $\left(t_{e-1}, t_{e}\right)$ and performs a transfer to serve $r_e$ at time $t_e$. It can further be proved that no server other than $s[r_{e-1}]$ can hold a copy crossing $t_{e-1}$. Assume on the contrary that there exists a server $s \neq s[r_{e-1}]$ that keeps a copy crossing $t_{e-1}$. This copy must be deleted at some time instant $t \leq t_{e}$, since there is no copy crossing $t_{e}$. Hence, this copy is not used for serving any local request at $s$ after $t_{e-1}$, since there is no request during $\left(t_{e-1}, t_{e}\right)$. Thus, the copy in $s$ can be removed during $\left(t_{e-1}, t\right)$. As a result, the total cost is reduced, contradicting the optimality of the offline strategy. Therefore, no copy at any other server crosses $t_{e-1}$. It follows from the partition definition that $d = e-1$. So, we just pick the storage period $(t_{e-1}, t_e)$ of the copy at $s[r_{e-1}]$ to cover the time span of the partition (see Figure \ref{cases}).

\medskip
\noindent \textbf{Case B: Request $r_e$ is served by a local copy.}
In this case, we pick a set of data copies iteratively in reverse order of time.
By Proposition \ref{pro-2}, the copy serving $r_e$ must be created no later than the time $t_{p(e)}$ of the preceding request $r_{p(e)}$.
First, we pick the storage period $(t_{p(e)}, t_e)$ of the copy at $s[r_e]$ and move backward to time $t_{p(e)}$.
By the partition definition, we must have $p(e) \geq d$ (otherwise, the copy crosses $t_d$).
If $p(e) = d$, we stop.
If $p(e) > d$, by definition, there exists a copy at some other server $s \neq s[r_{p(e)}]$ crossing $t_{p(e)}$.
It can be proved that the copy at $s$ must be kept till at least the first local request $r_g$ at $s$ after $t_{p(e)}$ and hence
will serve that request. If it does not serve any local request at $s$ (see Figure \ref{fig:cross} for an illustration), the copy can be deleted earlier at $t_{p(e)}$, because all transfers originating from the copy at $s$ after $t_{p(e)}$ can originate from the copy at $s[r_e]$ instead. This reduces the total cost and contradicts the optimality of the offline strategy.

\begin{figure}[htbp]
\centering
\includegraphics[width=9cm]{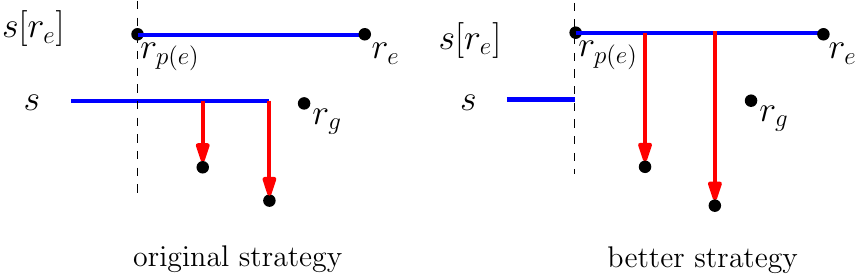}
\caption{Illustration of a data copy at $s$ crossing time $t_{p(e)}$}
\label{fig:cross}
\end{figure}

Since the copy at $s$ is kept to serve the first request $r_g$ at $s$ after $t_{p(e)}$, by Proposition \ref{pro-2}, the copy at $s$ must be created no later than the last request $r_{p(g)}$ at $s$ before $t_{p(e)}$.
So, we pick the storage period $(t_{p(g)}, t_g)$ of the copy and move backward to time $t_{p(g)}$.
If $p(g) = d$, we stop.
If $p(g) > d$, again there exists a copy at some other server $s' \neq s[r_{p(g)}]$ crossing $t_{p(g)}$.
Similarly, the copy must be kept till at least the first request $r_h$ at $s'$ after $t_{p(g)}$ and created no later than the last request $r_{p(h)}$ at $s'$ before $t_{p(g)}$.
Then, we pick the storage period $(t_{p(h)}, t_h)$ of the copy and move backward to time $t_{p(h)}$.
This backtracking process is repeated until time $t_d$ is reached.
Eventually, as illustrated in Figure \ref{cases}, we collect a set of storage periods of copies covering the time span of the partition. Each storage period has the form of $(t_{p(i)}, t_i)$, i.e., between two consecutive requests at the same server. We shall denote the set of storage periods picked as $Q = \{(t_{p({k_1})}, t_{k_1}), (t_{p({k_2})}, t_{k_2}), \dots, (t_{p({k_{|Q|}})}, t_{k_{|Q|}})\}$, where $d = p(k_1) < k_1 < k_2 < \cdots < k_{|Q|} = e$, each  $(t_{p(k_j)}, t_{k_j})$ crosses the start time $t_{p(k_{j+1})}$ of the next period, and $|Q|$ is the number of storage periods picked.

\begin{figure}[t]
\centering
\includegraphics[width=8.5cm]{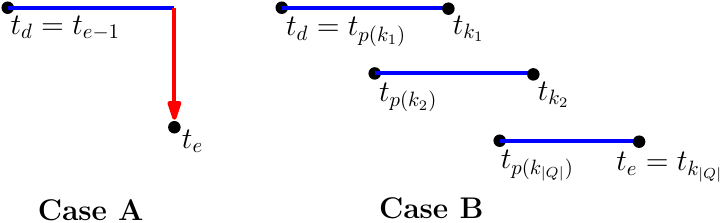}
\caption{Illustration of \textbf{Case A} and \textbf{Case B}}
\label{cases}
\end{figure}

\section{Robustness Analysis}
\label{sec:robustness}

In this section, we show that our proposed algorithm is $(1+\frac{1}{\hyperpar})$-robust, i.e., the competitive ratio of the algorithm is always bounded by $1+\frac{1}{\hyperpar}$, irrespective of the prediction correctness. We analyze partitions of cases A and B separately.

\medskip
\noindent \textbf{Case A:} In this case, $d = e-1$, i.e., the partition has the form of $\left\langle r_{e-1},r_{e} \right\rangle$. So, $\textbf{Online}(d,e)$ refers to the online cost allocated to request $r_{e}$, and $\textbf{OPT}(d,e)$ refers to the cost incurred over the period $(t_{e-1}, t_e]$ in the optimal offline strategy. By the analysis in Section \ref{sec:preparation}, in the optimal offline strategy, server $s[r_{e-1}]$ holds the only copy during $(t_{e-1}, t_e)$ and performs an outgoing transfer to serve $r_e$. Hence, $\textbf{OPT}(d,e) = (t_e - t_{e-1}) + \trfrcost$.

According to Proposition \ref{costsummary}, if $r_e$ is a \textbf{Type-1} request by our online algorithm, the online cost allocated to $r_e$ is $l_e + \trfrcost$ (where $l_e$ is the duration of the regular copy after $r_{p(e)}$). Recall that $l_e$ equals either $\hyperpar\cdot\trfrcost$ or $\trfrcost$. Since $l_e \leq \trfrcost$, we have $\frac{\textbf{Online}(d,e)}{\textbf{OPT}(d,e)} \leq \frac{\trfrcost + \trfrcost}{\trfrcost} =2 \leq 1 + \frac{1}{\hyperpar}$.

If $r_e$ is a \textbf{Type-2/4} request, the online cost allocated to $r_e$ is at most $\left(t_{e}-t'_{e}\right)+l_e+\trfrcost$. By Proposition \ref{pro-3}, there is no overlap between the storage period of the regular copy after $r_{e-1}$ and the storage period $\left(t'_{e},t_{e}\right)$ of the special copy for serving $r_e$. This implies that $\left(t'_{e},t_{e}\right)$ is completely contained in $\left(t_{e-1},t_{e}\right)$, so $t_{e}-t'_{e}<t_{e}-t_{e-1}$. Thus, $\frac{\textbf{Online}(d,e)}{\textbf{OPT}(d,e)} < \frac{\left(t_{e}-t_{e-1}\right)+l_e+\trfrcost}{(t_e - t_{e-1}) + \trfrcost} < 1+\frac{\trfrcost}{\trfrcost} =2 \leq 1 + \frac{1}{\hyperpar}$.

If $r_e$ is a \textbf{Type-3} request, the online cost allocated to $r_e$ is $t_{e}-t_{p(e)}$. By definition, $t_{e}-t_{p(e)} \leq l_e \leq \trfrcost$. Hence, $\frac{\textbf{Online}(d,e)}{\textbf{OPT}(d,e)} \leq \frac{\trfrcost}{\trfrcost} = 1 < 1 + \frac{1}{\hyperpar}$.

\medskip
\noindent \textbf{Case B:} In this case, we have a set of storage periods $Q$ covering the time span of the partition. For each storage period $(t_{p({k_j})}, t_{k_j}) \in Q$, we refer to the request $r_{k_j}$ at time $t_{k_j}$ as the {\em end sentinel request}. To calculate the optimal offline cost, we divide all requests $r_{d+1},r_{d+2},\dots,r_{e}$ into: $R_{Q} = \{r_{k_1}, r_{k_2}, \dots, r_{k_{|Q|}}\}$ (the end sentinel requests), $R_L$ ($r_i$'s that are not end sentinel requests and $t_i - t_{p(i)} \leq \trfrcost$), and $R_T$ ($r_i$'s that are not end sentinel requests and $t_i - t_{p(i)} > \trfrcost$). By Proposition \ref{prop5}, each request $r_i \in R_L$ is served by a local copy over the period $(t_{p(i)}, t_i)$. Since the storage periods $Q$ cover the time span, to minimize cost, each request $r_i \in R_T$ must be served by a transfer from a copy of some storage period in $Q$.
Thus, the optimal offline cost can be written as
\begin{equation}
\textbf{OPT}(d,e) = \sum_{r_i \in R_{Q}}(t_i-t_{p(i)}) + \sum_{r_i \in R_L}(t_i-t_{p(i)}) + \sum_{r_i \in R_T} \trfrcost.
\label{opt}
\end{equation}

On the other hand, by our online algorithm, we divide all requests $r_{d+1},r_{d+2},\dots,r_{e}$ into: $R_1$ (\textbf{Type-1} requests), $R_2$ (\textbf{Type-2} requests), $R_3$ (\textbf{Type-3} requests), and $R_4$ (\textbf{Type-4} requests). By Proposition \ref{costsummary}, the total online cost allocated to all requests can be written as
\begin{equation}
\textbf{Online}(d,e)
=
\sum_{r_i \in R_1 \cup R_2}
\trfrcost + \sum_{r_i \in R_1 \cup R_2 \cup R_4}
l_i + \sum_{r_i \in R_2 \cup R_4}
\!\!\!
\left(t_i-t'_i\right) + \sum_{r_i \in R_3}(t_i-t_{p(i)}).
\label{onlinecost}
\end{equation}

By the algorithm definition, special copies cannot cross the time of any request, because each special copy is deleted after serving the next request. Recall that $Q = \{(t_{p({k_1})}, t_{k_1}), (t_{p({k_2})}, t_{k_2}), \dots,$ $(t_{p({k_{|Q|}})}, t_{k_{|Q|}})\}$, where $p(k_1) = d$, $k_{|Q|} = e$, and each  $(t_{p(k_j)}, t_{k_j})$ crosses time $t_{p(k_{j+1})}$. Thus, we can divide the time span of the partition into the following disjoint intervals: $(t_{p({k_1})}, t_{p(k_2)})$, $(t_{p({k_2})},$ $t_{p(k_3)})$, \dots, $(t_{p({k_{|Q|-1}})}, t_{p(k_{|Q|})})$ and $(t_{p({k_{|Q|}})}, t_{k_{|Q|}})$.
The storage period $\left(t'_i, t_i\right)$ of the special copy for serving each request $r_i \in R_2 \cup R_4$ must be completely contained in one of these intervals, because there are requests at both endpoints of these intervals. For those intervals containing special copies, we select their original periods in $Q$ to form a subset $X \subset Q$, i.e., for each $j < |Q|$, $(t_{p({k_j})}, t_{k_j}) \in X$ if there exist special copies during $(t_{p({k_j})}, t_{p(k_{j+1})})$; and $(t_{p({k_{|Q|}})},$ $t_{k_{|Q|}}) \in X$ if there exist special copies during $(t_{p({k_{|Q|}})}, t_{k_{|Q|}})$. Let $R_X$ denote the end sentinel requests of all storage periods in $X$. Each period $(t_{p(i)}, t_{i})\in X$ covers all special copies therein and $l_{i}$. By Proposition \ref{pro-3}, all the special copies do not overlap. Thus, we have
\begin{equation}
\sum_{(t_{p(i)},t_i)\in X} \!\!\! (t_{i} - t_{p(i)}) = \!\!\! \sum_{r_i \in R_{X}}(t_i-t_{p(i)}) \geq \!\!\! \sum_{r_i \in R_2 \cup R_4} \!\!\! \left(t_i-t'_i\right)+\sum_{r_i \in R_{X}}l_i
\label{upperspecial0}
\end{equation}

In addition, by our online algorithm, a special copy occurs only when all the regular copies expire. Thus, for each end sentinel request $r_i \in R_X$, the regular copy after the preceding request  $r_{p(i)}$ must expire before $r_i$'s arrival. This implies that $r_i$ cannot be a \textbf{Type-3} request.

\begin{Proposition}
All requests in $R_X$ are
\textbf{Type-1/2/4} requests, i.e., $R_{X} \subset R_1 \cup R_2 \cup R_4$.
\label{prop:rx}
\end{Proposition}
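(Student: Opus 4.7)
The plan is to argue by contradiction: suppose some end sentinel request $r_{k_j} \in R_X$ is a \textbf{Type-3} request, and derive a conflict with Proposition \ref{pro-3}. The key observation is that the definition of $X$ forces a special copy to exist in a time interval that is entirely covered by a regular copy in $s[r_{k_j}]$, which is impossible.

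More concretely, I would proceed as follows. First, unpack what Type-3 means: by the algorithm, the copy serving $r_{k_j}$ has been continuously held in server $s[r_{k_j}]$ since the preceding request $r_{p(k_j)}$ and is still regular at time $t_{k_j}$. Hence a regular copy occupies the entire storage period $(t_{p(k_j)},\, t_{k_j})$ at $s[r_{k_j}]$. Second, invoke the definition of $X$: $(t_{p(k_j)},\, t_{k_j})\in X$ means that some special copy exists in the time interval $(t_{p(k_j)},\, t_{p(k_{j+1})})$ when $j<|Q|$, or in $(t_{p(k_{|Q|})},\, t_{k_{|Q|}})$ when $j=|Q|$. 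Third, observe that both of these candidate intervals are contained in $(t_{p(k_j)},\, t_{k_j})$, using the fact that the picked storage periods cross the starting time of the next one, so $t_{p(k_{j+1})} < t_{k_j}$.

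At this point the contradiction is immediate from Proposition \ref{pro-3}: the storage period of any special copy cannot overlap with that of a regular copy, yet the supposed special copy would lie entirely within the regular copy's period at $s[r_{k_j}]$. Equivalently, one may use the algorithm-level argument that a copy can turn into a special copy only when it is the unique copy in the system, whereas $s[r_{k_j}]$ already holds a regular copy throughout the relevant interval, ruling out $c=1$ at the supposed promotion time. Either phrasing suffices; I would lead with the Proposition \ref{pro-3} version for conciseness.

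I do not expect any serious obstacle. The one subtlety worth handling carefully is the case split between $j<|Q|$ and $j=|Q|$, since the defining interval for membership in $X$ is $(t_{p(k_j)}, t_{p(k_{j+1})})$ in the former case and $(t_{p(k_{|Q|})}, t_{k_{|Q|}})$ in the latter. In both cases the containment inside $(t_{p(k_j)}, t_{k_j})$ is straightforward, so the contradiction goes through uniformly, yielding $R_X \subset R_1 \cup R_2 \cup R_4$.
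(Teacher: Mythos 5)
Your proof is correct and is essentially the paper's argument in contrapositive form: the paper notes that a special copy arises only after all regular copies (in particular the one at $s[r_{k_j}]$ after $r_{p(k_j)}$) have expired, so an end sentinel request of a period in $X$ cannot be \textbf{Type-3}, which is exactly the non-overlap contradiction you derive via Proposition \ref{pro-3}. Your handling of the containment of the defining interval of $X$ inside $(t_{p(k_j)}, t_{k_j})$ and the $j<|Q|$ versus $j=|Q|$ split is sound, so nothing further is needed.
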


Hence, applying  (\ref{upperspecial0}) to (\ref{onlinecost}), we can bound the total online cost by
\begin{equation}
\textbf{Online}(d,e)\leq \sum_{r_i \in R_1 \cup R_2} \trfrcost + \sum_{r_i \in R_1 \cup R_2 \cup R_4 \setminus R_{X}} l_i +\sum_{r_i \in R_3\cup R_X}(t_i-t_{p(i)}).
\label{upperonline}
\end{equation}

We show $\big(1+\frac{1}{\hyperpar}\big)$-robustness by comparing the corresponding terms in (\ref{opt}) and (\ref{upperonline}) for individual requests. Observe that for each request $r_i \notin R_3$, we have $t_i - t_{p(i)} \geq l_i \geq \hyperpar \cdot \trfrcost$ since the regular copy after $r_{p(i)}$ must expire before $r_i$'s arrival and a regular copy is at least $\hyperpar \cdot \trfrcost$ long. On the other hand, for each request $r_i \in R_3$, we have $t_i - t_{p(i)} \leq l_i \leq \trfrcost$ since a regular copy is at most $\trfrcost$ long.

By definition, $R_{X} \subset R_{Q}$. For each request $r_i \in R_{X}$, since $r_i \notin R_3$ by Proposition \ref{prop:rx}, we have $t_i - t_{p(i)} \geq \hyperpar \cdot \trfrcost$. The online cost for $r_i$ in (\ref{upperonline}) is at most $\trfrcost + t_i - t_{p(i)}$, while the offline cost for $r_i$ in (\ref{opt}) is $t_i - t_{p(i)}$. Hence, the online-to-offline ratio for $r_i$ is bounded by $\frac{\trfrcost + t_i - t_{p(i)}}{t_i - t_{p(i)}} \leq 1 + \frac{\trfrcost}{t_i - t_{p(i)}} \leq 1 + \frac{1}{\hyperpar}$.

For each request $r_i \in R_{Q} \setminus R_{X}$ or $r_i \in R_{L}$, the offline cost for $r_i$ in (\ref{opt}) is $t_i - t_{p(i)}$. If $r_i \in R_3$, the online cost for $r_i$ in (\ref{upperonline}) is the same. If $r_i \notin R_3$, the online cost for $r_i$ in (\ref{upperonline}) is at most $\trfrcost + l_i$. Since $t_i - t_{p(i)} \geq l_i \geq \hyperpar \cdot \trfrcost$, the online-to-offline ratio for $r_i$ is bounded by $\frac{\trfrcost+l_i}{t_i - t_{p(i)}} \leq \frac{\trfrcost+l_i}{l_i}\leq 1 + \frac{1}{\hyperpar}$.

For each request $r_i \in R_{T}$, the offline cost for $r_i$ in (\ref{opt}) is $\trfrcost$. If $r_i \in R_3$, the online cost for $r_i$ in (\ref{upperonline}) is $t_i - t_{p(i)} \leq \trfrcost$. Thus, the online-to-offline ratio for $r_i$ is bounded by $1$. If $r_i \notin R_3$, the online cost for $r_i$ in (\ref{upperonline}) is at most $\trfrcost + l_i$. Since $l_i \leq \trfrcost$, the online-to-offline ratio for $r_i$ is bounded by $\frac{\trfrcost+l_i}{\trfrcost} \leq 2\leq 1 + \frac{1}{\hyperpar}$.

Since the online-to-offline ratio for each individual request in the partition is bounded by $1 + \frac{1}{\hyperpar}$, it can be concluded that $\frac{\textbf{Online}(d,e)}{\textbf{OPT}(d,e)} \leq
1+\frac{1}{\hyperpar}$.

\begin{figure}[htbp]
\centering
\includegraphics[width=7.5cm]{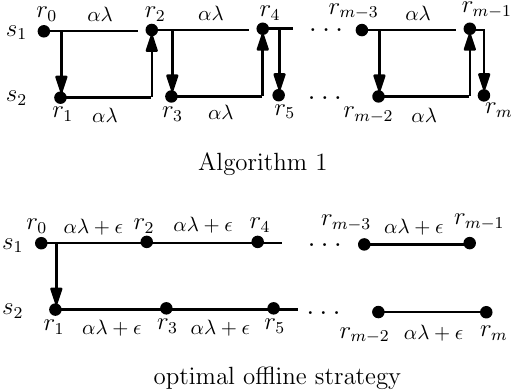}
\caption{A tight example for robustness analysis}
\label{robustness}
\end{figure}

Figure \ref{robustness} gives an example to show that the robustness analysis above is tight. A total of $m+1$ requests arise at two servers $s_1$ and $s_2$. The dummy request $r_0$ arises at $s_1$ at time $t_0 = 0$. The first request $r_1$ arises at $s_2$ at time $t_1 = \epsilon$. The inter-request time between two consecutive requests at $s_1$ or $s_2$ is $\hyperpar\trfrcost+\epsilon$, where $\epsilon>0$ is an small value. Suppose that the inter-request times at $s_1$ and $s_2$ are always predicted to be greater than $\trfrcost$. Then, the intended duration of the regular copy after each request is $\hyperpar \trfrcost$. By our online algorithm, $r_1$ is served by a transfer. $r_2$ arises right after the expiration of the regular copy in $s_1$, so it is served by a transfer from $s_2$. After the outgoing transfer, the regular copy in $s_2$ expires, and then $r_3$ arises at $s_2$ shortly afterwards. Hence, $r_3$ is served by a transfer from $s_1$. This pattern is repeated continuously, so that all the requests $r_2, r_3, \dots, r_{m}$ are served by transfers, as they all arise after the expiration of the preceding regular copies.  Thus, the total online cost is $(m-1)\cdot (\alpha\trfrcost+\trfrcost)+\trfrcost$.
In the optimal offline strategy, after $r_1$, both $s_1$ and $s_2$ keep copies to serve all their local requests. So, the optimal cost is $(m-1)\cdot(\hyperpar \trfrcost+\epsilon)+ \trfrcost$.
Therefore, the online-to-optimal cost ratio is $\frac{(m-1)\cdot (\alpha\trfrcost+\trfrcost)+\trfrcost}{(m-1)\cdot (\alpha\trfrcost+\epsilon)+\trfrcost}$, which approaches $1+\frac{1}{\alpha}$ as $m\rightarrow \infty$ and $\epsilon\rightarrow 0$.

\section{Consistency Analysis}
\label{sec:consistency}

In this section, we show that our proposed algorithm is $\frac{5+\hyperpar}{3}$-consistent, i.e., the competitive ratio of the algorithm is $\frac{5+\hyperpar}{3}$ when all predictions are correct. We start with an important observation for correct predictions.

\begin{Proposition}
Given that the prediction
of the inter-request time between $r_{p(i)}$ and $r_i$
is correct, if $r_i$ is a \textbf{Type-1/2/4} request, then $t_{i}-t_{p(i)} > \trfrcost$ and $l_i = \hyperpar\cdot\trfrcost$; if $r_i$ is a \textbf{Type-3} request, then $t_{i}-t_{p(i)} \leq \trfrcost$ and $l_i = \trfrcost$.
\label{consistrequest}
\end{Proposition}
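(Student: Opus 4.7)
The plan is to reduce the claim to a two-step chain: first translate the request type into a strict inequality between $t_i - t_{p(i)}$ and $l_i$, and then use the hypothesis that the prediction is correct to pin down $l_i$ (and hence the comparison with $\trfrcost$). Throughout I will use the fact that the prediction at $r_{p(i)}$ concerns exactly $r_i$, because by definition $r_i$ is the next request at $s[r_i]$ after $r_{p(i)}$. By the algorithm (lines 10--13), $l_i = \trfrcost$ precisely when the prediction says ``within $\trfrcost$'' and $l_i = \hyperpar\cdot\trfrcost$ when it says ``beyond $\trfrcost$''. Correctness of the prediction therefore means: $l_i = \trfrcost \iff t_i - t_{p(i)} \leq \trfrcost$.

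First I would establish the type-to-duration comparison. For a \textbf{Type-3} request, $r_i$ is served by a regular copy still alive at time $t_i$, so $r_i$ arrives inside the intended duration after $r_{p(i)}$, giving $t_i - t_{p(i)} \leq l_i$. For \textbf{Type-1/2/4} requests, the regular copy at $s[r_i]$ has ceased to exist as a regular copy by $t_i$: a Type-1/2 request is served by an incoming transfer (so no copy at $s[r_i]$ at $t_i$ at all), and a Type-4 request is served by a special copy at $s[r_i]$, which by the algorithm only comes into being after the preceding regular copy's intended duration expires. In all three cases $t_i > t_{p(i)} + l_i$.

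Next I would plug in the correctness condition. For Type-3, suppose for contradiction that $l_i = \hyperpar \cdot \trfrcost$; then the prediction at $r_{p(i)}$ was ``beyond $\trfrcost$'', so correctness forces $t_i - t_{p(i)} > \trfrcost$, contradicting $t_i - t_{p(i)} \leq l_i = \hyperpar \cdot \trfrcost \leq \trfrcost$. Hence $l_i = \trfrcost$, and consequently $t_i - t_{p(i)} \leq \trfrcost$ as required. For Type-1/2/4, suppose instead that $l_i = \trfrcost$; then correctness forces $t_i - t_{p(i)} \leq \trfrcost = l_i$, contradicting $t_i - t_{p(i)} > l_i$. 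So $l_i = \hyperpar \cdot \trfrcost$, and the prediction was ``beyond $\trfrcost$'', whence correctness gives $t_i - t_{p(i)} > \trfrcost$.

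I expect no substantial obstacle here: the only point that requires a little care is the Type-4 case, because one must use the algorithm's construction of special copies (lines 21--22) to be sure that a special copy at $s[r_i]$ can only serve a request arriving strictly after $t_{p(i)} + l_i$, not at or before that time. Once that is spelled out, the rest is a clean case analysis driven entirely by the two equivalences ``type determines the sign of $t_i - t_{p(i)} - l_i$'' and ``correctness ties $l_i$ to the sign of $t_i - t_{p(i)} - \trfrcost$.''
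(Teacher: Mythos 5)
Your proposal is correct and follows essentially the same route as the paper: first relate the request type to the comparison between $t_i - t_{p(i)}$ and $l_i$ (Type-3 gives $t_i - t_{p(i)} \leq l_i$, Type-1/2/4 gives $t_i - t_{p(i)} > l_i$), then use the correctness of the prediction to rule out the wrong value of $l_i$ by contradiction. Your extra care in justifying the Type-4 case via the special-copy construction is a welcome elaboration of what the paper dismisses as ``by definition,'' but it does not change the argument.
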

\begin{proof}
If $r_{i}$ is a \textbf{Type-1/2/4} request, by definition, the inter-request time between $r_{p(i)}$ and $r_{i}$ (i.e., $t_{i}-t_{p(i)}$) must be longer than the duration $l_i$ of the regular copy after $r_{p(i)}$. Since the prediction is correct, if $t_{i}-t_{p(i)} \leq \trfrcost$, then $l_i=\trfrcost \geq t_{i}-t_{p(i)}$, leading to a contradiction. Thus, $t_{i}-t_{p(i)} > \trfrcost$ and hence, $l_i=\hyperpar\cdot\trfrcost$. If $r_{i}$ is a \textbf{Type-3} request, by definition, the inter-request time $t_{i}-t_{p(i)}$ must be no longer than $l_i$. Since the prediction is correct, if $t_{i}-t_{p(i)} > \trfrcost$, then $l_i= \hyperpar\cdot\trfrcost \leq \trfrcost < t_{i}-t_{p(i)}$, leading to a contradiction. Thus, $t_{i}-t_{p(i)} \leq \trfrcost$ and hence, $l_i=\trfrcost$.
\end{proof}

Again, we analyze partitions of cases A and B (as presented in Section \ref{sec:preparation}) separately.

\medskip
\noindent \textbf{Case A:} In this case, the partition has the form of $\left\langle r_{e-1},r_{e} \right\rangle$, i.e., $d=e-1$. Same as in the robustness analysis, $\textbf{OPT}(d,e) = (t_e - t_{e-1}) + \trfrcost$, and $\textbf{Online}(d,e)$ is the online cost allocated to $r_{e}$. Since $r_e$ is served by a transfer in the optimal offline strategy, by Proposition \ref{prop5}, $t_{e}-t_{p(e)}>\trfrcost$. It follows from Proposition \ref{consistrequest} that $r_e$ is a \textbf{Type-1/2/4} request by our online algorithm and $l_{e}=\hyperpar\trfrcost$.

Based on Proposition \ref{costsummary}, if $r_{e}$ is a \textbf{Type-1} request, the online cost allocated to $r_e$ is $l_e+\trfrcost=\hyperpar\trfrcost+\trfrcost$. Thus, $\frac{\textbf{Online}(d,e)}{\textbf{OPT}(d,e)} = \frac{\hyperpar\trfrcost+\trfrcost}{(t_e - t_{e-1}) + \trfrcost} < 1+ \hyperpar \leq \frac{5+\hyperpar}{3}$.

If $r_e$ is a \textbf{Type-2/4} request, the online cost allocated to $r_e$ is at most $\left(t_{e}-t'_{e}\right)+l_e+\trfrcost=\left(t_{e}-t'_{e}\right)+\hyperpar\trfrcost+\trfrcost$. By Proposition \ref{pro-3}, there is no overlap between the storage period of the regular copy after $r_{e-1}$ and the storage period $\left(t'_{e},t_{e}\right)$ of the special copy for serving $r_e$. Hence, $\left(t'_{e},t_{e}\right)$ is fully contained in $\left(t_{e-1},t_{e}\right)$ so that $t_{e}-t'_{e}<t_{e}-t_{e-1}$. Thus, $\frac{\textbf{Online}(d,e)}{\textbf{OPT}(d,e)} < \frac{\left(t_{e}-t_{e-1}\right)+\hyperpar\trfrcost+\trfrcost}{(t_e - t_{e-1}) + \trfrcost} < 1 + \hyperpar \leq \frac{5+\hyperpar}{3}$.

\medskip
\noindent\textbf{Case B:} In this case, we have picked a set of storage periods $Q$ covering the time span of the partition. Recall that $R_1/R_2/R_3/R_4$ denotes all \textbf{Type-1/2/3/4} requests among $r_{d+1},r_{d+2},\dots,r_{e}$ by our online algorithm. By Proposition \ref{consistrequest}, for each request $r_i \in R_1\cup R_2\cup R_4$, $l_i=\hyperpar\trfrcost$. Hence, the online cost (\ref{onlinecost}) can be rewritten as
\begin{equation}
\textbf{Online}(d,e) =
\left(|R_1| + |R_2|\right) \cdot \left(1 + \hyperpar\right)\trfrcost + |R_4| \cdot \hyperpar \trfrcost
+ \sum_{r_i \in R_2\cup R_4} (t_i-t'_i) + \sum_{r_i \in R_{3}}(t_i - t_{p(i)}).
\label{online}
\end{equation}

Recall that $X$ is the set of storage periods in $Q$ that contain special copies. Using the definition of $X$, we can bound the total cost of special copies in (\ref{online}).
The proofs of the lemmas hereafter are given in the appendix.

\begin{Lemma}
When all predictions are correct, we have 
\begin{equation*}
    \sum_{r_i\in R_2\cup R_4}\left(t_{i}-t'_{i}\right) \leq \sum_{r_i\in R_X}(t_{i} - t_{p(i)})-\left(|R_2|+|R_4|\right)\cdot\hyperpar\trfrcost.
\end{equation*}
\label{lem-2}
\end{Lemma}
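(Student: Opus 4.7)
The plan is to strengthen inequality (\ref{upperspecial0}) under correct predictions by packaging, for each $r_i\in R_2\cup R_4$, a combined ``charging segment'' $S_i:=(t_{q(i)},t_i)$ on server $s[r_{q(i)}]$, consisting of the special copy of length $t_i-t'_i$ together with the regular copy at $s[r_{q(i)}]$ that immediately precedes it. Under correct predictions, the intended duration of that preceding regular copy must equal $\hyperpar\trfrcost$, so that $t'_i-t_{q(i)}=\hyperpar\trfrcost$: if it were $\trfrcost$ instead, the next request at $s[r_{q(i)}]$ (which is $r_i$ for Type-4, or the subsequent arrival at $s[r_{q(i)}]$ for Type-2) would land within $\trfrcost$ of $r_{q(i)}$ and either be served by the still-live regular copy (contradicting $r_i\in R_4$) or renew it before it could ever expire into a special copy (contradicting the existence of the special copy serving $r_i$). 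Hence $|S_i|=(t_i-t'_i)+\hyperpar\trfrcost$, and the lemma is equivalent to
\[
\sum_{r_i\in R_2\cup R_4}\bigl[(t_i-t'_i)+\hyperpar\trfrcost\bigr]\ \leq\ \sum_{r_i\in R_X}(t_i-t_{p(i)}).
\]
I will establish this through two sub-claims: (a) the segments $\{S_i\}$ are pairwise disjoint, and (b) each $S_i$ is contained in a single sub-interval whose associated storage period lies in $X$; summing the lengths of the disjoint segments inside each sub-interval then concludes the argument, because each sub-interval length is bounded above by the length of the corresponding storage period $(t_{p(k_j)},t_{k_j})\in X$.

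The technical core is a ``quiet-period'' observation that I will prove first: the open interval $(t_{q(i)},t'_i)$ contains no request at any server. A request at $s[r_{q(i)}]$ in this window would contradict $r_{q(i)}$ being the most recent request at $s[r_{q(i)}]$ before $r_i$. A request at a different server at some time $\tau\in(t_{q(i)},t'_i)$ would plant a fresh regular copy whose earliest expiration $\tau+\hyperpar\trfrcost$ strictly exceeds $t'_i$, so the copy count $c$ remains $\geq 2$ just before $t'_i$ and the regular copy at $s[r_{q(i)}]$ cannot transition to a special copy, contradicting the very existence of $(t'_i,t_i)$. Moreover, $(t'_i,t_i)$ is itself request-free because a special copy is terminated by the first subsequent arrival, namely $r_i$. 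Thus the interior of $S_i$ contains no request.

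Claim (b) is then immediate: every sub-interval boundary is a request time and so cannot lie in the interior of $S_i$; and since $(t'_i,t_i)$ sits inside the enclosing sub-interval and witnesses a special copy there, that sub-interval's storage period belongs to $X$ by definition. For (a), take $a,b\in R_2\cup R_4$ with $t_{q(a)}<t_{q(b)}$. The quiet-period property applied to $S_a$ immediately yields $t_{q(b)}\geq t'_a$. The only remaining configuration to exclude is $t_{q(b)}\in[t'_a,t_a)$: in that regime, during $(t'_a,t_a)$ the special copy at $s[r_{q(a)}]$ is the unique copy in the system, yet the quiet-period property of $S_b$ demands that a regular copy sit at $s[r_{q(b)}]\neq s[r_{q(a)}]$ throughout $(t_{q(b)},t'_b)\subseteq(t'_a,t_a)$, contradicting that uniqueness. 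Hence $t_{q(b)}\geq t_a$ and $S_a\cap S_b=\emptyset$.

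The main obstacle is the cross-server case of claim (a): Proposition \ref{pro-3} separates the special copies of two distinct segments but says nothing about their preceding regular portions, which live at different servers and could a priori run in parallel. The quiet-period observation, tightly coupled with the algorithm's only-one-copy trigger for forming a special copy and with the correct-prediction calibration $t'_i-t_{q(i)}=\hyperpar\trfrcost$, is precisely what forecloses this overlap and carries most of the technical weight.
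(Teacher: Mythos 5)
Your proof is correct and rests on the same two pillars as the paper's argument: Proposition \ref{consistrequest} pins the regular copy immediately preceding each special copy at exactly $\hyperpar\trfrcost$, and Proposition \ref{pro-3} forces these $\hyperpar\trfrcost$ buffers and the special copies to pack disjointly into the periods of $X$. The bookkeeping differs. The paper orders the special copies within each period of $X$, notes that consecutive special copies are at least $\hyperpar\trfrcost$ apart and that the end sentinel's own regular copy occupies the first $\hyperpar\trfrcost$ of the period, and subtracts $y\cdot\hyperpar\trfrcost$ per period; you instead glue each special copy to its own preceding regular copy to form segments $S_i$ and prove pairwise disjointness plus containment in the sub-intervals. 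Your quiet-period observation is a genuinely different (and heavier) device for the disjointness step than is needed: the cross-server overlap you single out as the main obstacle is already excluded by the second sentence of Proposition \ref{pro-3} --- the regular portion $(t_{q(b)},t'_b)$ of the later segment cannot overlap the special portion $(t'_a,t_a)$ of the earlier one, and since it ends at $t'_b\geq t_a$ it must start at $t_{q(b)}\geq t_a$, which gives disjointness directly. The quiet-period argument does earn its keep for your containment claim (b), a burden the paper avoids because the buffers it subtracts (the end sentinel's regular copy and the gaps between consecutive special copies) sit inside the period automatically. Two micro-imprecisions, neither fatal: the containment $(t_{q(b)},t'_b)\subseteq(t'_a,t_a)$ need not hold, since the special copy $(t'_a,t_a)$ may be shorter than $\hyperpar\trfrcost$, though the contradiction you want (a regular copy coexisting with a special copy) survives via Proposition \ref{pro-3}; and the boundary case $t_{q(b)}=t'_a$ is vacuous anyway, because a request arriving exactly when the copy at $s[r_{q(a)}]$ reaches its intended expiration would keep the copy count at two and prevent it from becoming special in the first place.
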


By (\ref{online}) and Lemma \ref{lem-2}, we have
\begin{equation}
\textbf{Online}(d,e) \leq |R_1| \cdot \left(1 + \hyperpar\right)\trfrcost + |R_2| \cdot \trfrcost+\sum_{r_i\in R_{3}\cup R_X}(t_i-t_{p(i)}).
\label{upboundonline}
\end{equation}

Recall that in the optimal offline strategy, we divide all requests $r_{d+1},r_{d+2},\dots,r_{e}$ into: $R_{Q}$ (the end sentinel requests), $R_L$ ($r_i$'s that are not end sentinel requests and $t_i - t_{p(i)} \leq \trfrcost$), and $R_T$ ($r_i$'s that are not end sentinel requests and $t_i - t_{p(i)} > \trfrcost$). According to the request types by our online algorithm, we further divide the end sentinel requests $R_Q$ into: $R'_1$ (\textbf{Type-1} requests), $R'_2$ (\textbf{Type-2} requests), $R'_3$ (\textbf{Type-3} requests), and $R'_4$ (\textbf{Type-4} requests). Obviously, it holds that
\begin{equation}
|Q|=|R_{Q}| = |R'_1|+|R'_2|+|R'_3|+|R'_4|.
\label{Qdef}
\end{equation}
Moreover, it follows from $R_X \subset R_Q$ and Proposition \ref{prop:rx} that $R_X \subset  R'_1\cup R'_2 \cup R'_4$.
In addition, by Proposition \ref{consistrequest}, all  requests in $R_L$ are \textbf{Type-3} requests and all requests in $R_T$ are \textbf{Type-1/2/4} requests. Thus, $R_L = R_3\setminus R'_3$ and $R_T= (R_1\cup R_2\cup R_4)\setminus(R'_1\cup R'_2\cup R'_4)$.
Hence, the optimal offline cost (\ref{opt}) can be rewritten as
\begin{eqnarray}
\textbf{OPT}(d,e) & = &\sum_{r_i \in R'_3 \cup R_{X}}(t_i-t_{p(i)}) + \sum_{r_i \in R'_1\cup R'_2\cup R'_4 \setminus R_{X}}(t_i-t_{p(i)}) 
\nonumber \\
& & + \sum_{r_i \in R_3\setminus R'_3}(t_i-t_{p(i)}) + \sum_{r_i \in (R_1\cup R_2\cup R_4)\setminus(R'_1\cup R'_2\cup R'_4)} \trfrcost \nonumber \\
& \geq &\sum_{r_i \in R_3 \cup R_{X}}(t_i-t_{p(i)}) + \sum_{r_i \in R'_1\cup R'_2\cup R'_4 \setminus R_{X}} \trfrcost
\nonumber \\
& & + \sum_{r_i \in (R_1\cup R_2\cup R_4)\setminus(R'_1\cup R'_2\cup R'_4)} \trfrcost \quad (\text{by Proposition \ref{consistrequest}}) \nonumber \\
& = & (|R_{1}| + |R_{2}| + |R_{4}| - |X|) \cdot \trfrcost + \sum_{r_i \in R_3 \cup R_{X}}(t_i-t_{p(i)}).\quad\,\,
\label{RRR}
\end{eqnarray}

Combining (\ref{upboundonline}) and (\ref{RRR}), we have
\begin{eqnarray}
\frac{\textbf{Online}(d,e)}{\textbf{OPT}(d,e)} 
& \leq & 1+\frac{|R_1| \cdot \hyperpar\trfrcost - |R_4| \cdot \trfrcost + |X| \cdot \trfrcost}{(|R_{1}| + |R_{2}| + |R_{4}| - |X|) \cdot \trfrcost + \sum\limits_{r_i\in R_{3}\cup R_X}(t_i - t_{p(i)})}
\nonumber \\
& \leq & 1+\frac{|R_1| \cdot \hyperpar\trfrcost - |R_4| \cdot \trfrcost + |X| \cdot \trfrcost}{(|R_{1}| + |R_{2}| + |R_{4}|) \cdot \trfrcost}  \quad (\text{by Propositions \ref{prop:rx}, \ref{consistrequest}}) \nonumber \\
& = & 1+\hyperpar+\frac{|X|-\left(|R_2| + |R_4|\right) \cdot \hyperpar - |R_4|}{|R_{1}| + |R_{2}| + |R_{4}|}.
\label{upperratio}
\end{eqnarray}

By definition, for each $(t_{p({k_j})}, t_{k_j}) \in X$, if $j < |Q|$, the period $(t_{p({k_j})}, t_{p(k_{j+1})})$ contains at least one special copy and hence one \textbf{Type-2/4} request; if $j = |Q|$, the period $(t_{p({k_{|Q|}})}, t_{k_{|Q|}})$ contains at least one special copy and hence one \textbf{Type-2/4} request. Since these periods are disjoint, all these \textbf{Type-2/4} requests are distinct. Thus, $|X|\leq |R_2|+|R_4|$. As a result,
\begin{equation}
\frac{\textbf{Online}(d,e)}{\textbf{OPT}(d,e)} \leq 1+\hyperpar+\frac{|X| \cdot(1-\hyperpar)-|R_4|}{|R_{1}| + |R_{2}| + |R_{4}|}.
\label{ratiobound}
\end{equation}

In order to bound the above ratio, we discuss the cases of $|Q| \geq 2$ and $|Q| = 1$ separately.

\medskip
\noindent\textbf{Case 1: $|Q| \geq 2$.} In this case, we have the following result about the total number of \textbf{Type-1/2/4} requests in the partition.

\begin{Lemma} When $|Q| \geq 2$ and $|X|\geq 1$, it holds that
$|R_{1}|+|R_{2}|+|R_{4}|\geq \max\{2|X|-1,2\}$.
\label{lem-5}
\end{Lemma}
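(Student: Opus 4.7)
The plan is to associate, with each period $P_j$ for $j\in X$, two Type-$1/2/4$ requests --- its end sentinel and a special-copy consumer inside $I_j$ --- and then argue that these $2|X|$ associations produce at least $2|X|-1$ distinct requests (and at least $2$ whenever $|X|\geq 1$).

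For each $j\in X$, the end sentinel $r_{k_j}$ belongs to $R_X\subseteq R_1\cup R_2\cup R_4$ by Proposition~\ref{prop:rx}, giving $|X|$ pairwise-distinct sentinels. Because $j\in X$, the interval $I_j$ also contains at least one special copy, whose consumer $c_j$ is a Type-$2/4$ request strictly inside $I_j$; since the $I_j$'s are pairwise disjoint, $\{c_j:j\in X\}$ provides another $|X|$ distinct Type-$2/4$ requests. Within a single $j$, the sentinel and the consumer are different: for $j<|Q|$ the crossing property gives $t_{k_j}>t_{p(k_{j+1})}$ so $r_{k_j}\notin I_j$; and for $j=|Q|$, $r_{k_j}$ sits at the right endpoint of $I_{|Q|}$, outside the open interval.

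Across different indices, the only possible coincidence between a sentinel and a consumer is $r_{k_j}=c_{j'}$ for some $j<j'$ (both in $X$), which forces $t_{k_j}\in I_{j'}$ and forces $r_{k_j}$ itself to be Type-$2/4$. The main obstacle, and the crux of the proof, is to show that at most one such cross-coincidence occurs. My plan is to exploit the structural rigidity of $Q$: consecutive periods $P_j,P_{j+1}$ reside at distinct servers $s_j^*\neq s_{j+1}^*$; inside the open interval $(t_{p(k_j)},t_{k_j})$ there is no request at $s_j^*$ by definition of $p(\cdot)$; special copies are globally non-overlapping (Proposition~\ref{pro-3}); and under correct predictions each special copy is preceded by a length-$\hyperpar\trfrcost$ regular copy at the same server, whose triggering request has a Type-$1/2/4$ successor. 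Assuming two distinct cross-coincidences, chasing the two consumed special copies back through their triggering regular copies would either force two overlapping special copies on the same server, or require a request at some $s_j^*$ inside the forbidden interior $(t_{p(k_j)},t_{k_j})$, yielding a contradiction.

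With the at-most-one overlap bound in place, writing $A=\{r_{k_j}\}_{j\in X}$ and $B=\{c_j\}_{j\in X}$ yields $|R_1|+|R_2|+|R_4|\geq |A\cup B|=2|X|-|A\cap B|\geq 2|X|-1$. For the $\max\{\,\cdot\,,2\}$ clause, when $|X|=1$ the sentinel $r_{k_j}$ and the consumer $c_j$ themselves already supply $2$ distinct Type-$1/2/4$ requests, matching the required bound.
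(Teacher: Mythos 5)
There is a genuine gap, on two fronts.

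First, your within-interval distinctness claim is false for the last interval. The special copy that serves a \textbf{Type-2/4} request $r_i$ has storage period $(t'_i,t_i)$ ending at $r_i$'s own arrival time, and for $i=k_{|Q|}$ this period is contained in $(t_{p(k_{|Q|})},t_{k_{|Q|}})$. So when $(t_{p(k_{|Q|})},t_{k_{|Q|}})\in X$ and $r_{k_{|Q|}}$ is \textbf{Type-2/4}, the sentinel and the consumer of that interval are the \emph{same} request; the consumer is not ``strictly inside'' the interval. This already sinks your $\max\{\cdot,2\}$ clause: when $|X|=1$ and the unique period of $X$ is the last one, your construction exhibits only one request, not two. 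The paper has to work for this exact case, and it does so by a separate argument that has no analogue in your proposal: using $|Q|\geq 2$, it picks a server among $s[r_{k_{|Q|}}]$, $s[r_{k_{|Q|-1}}]$ that does not hold the copy over the first storage period, notes that its first request after $t_d$ is served by a transfer in the optimal offline strategy (no copy crosses $t_d$ there), and concludes via Propositions \ref{prop5} and \ref{consistrequest} that this request is \textbf{Type-1/2/4}.

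Second, the step you yourself call the crux --- that at most one coincidence occurs in total --- is only a plan, not an argument, and it is doubtful it can be completed by the purely online-side chase you sketch. Once the last-interval self-coincidence above is admitted, you would need to exclude it co-occurring with any cross-coincidence $r_{k_j}=c_{j'}$, or else $|A\cup B|$ drops to $2|X|-2$. The paper's route to the needed distinctness is of a different nature: for each non-final period of $X$ it takes the \emph{last} \textbf{Type-1/2/4} request $r_g$ at server $s[r_{k_{j+1}}]$ within $(t_{p(k_j)},t_{p(k_{j+1})}]$ (not a special-copy consumer), and proves $r_g\notin R_X$ by an exchange argument on the optimal offline strategy: if $(t_{p(g)},t_g)\in X$, then two storage periods each longer than $\trfrcost$ (by Propositions \ref{prop:rx} and \ref{consistrequest}) would coexist in the offline solution in a configuration where one of them could be replaced by a single transfer, contradicting optimality. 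Since these auxiliary requests lie outside $R_X$ and in disjoint intervals, they are automatically distinct from all sentinels and from each other, and the count $2|X|-1$ follows with no case analysis of coincidences. Your sketch never invokes the offline optimum, which is the ingredient doing the real work here; as written, the proposal does not establish the lemma.
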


By (\ref{ratiobound}) and Lemma \ref{lem-5}, we have:

$\bullet$ When $|X|\geq 2$,  $\frac{\textbf{Online}(d,e)}{\textbf{OPT}(d,e)}\leq 1+\hyperpar+\frac{|X|\cdot\left(1-\hyperpar\right)}{2|X|-1} \leq 1+\hyperpar+\frac{2\cdot\left(1-\hyperpar\right)}{3}=\frac{5+\hyperpar}{3}$.

$\bullet$ When $|X|=1$, $\frac{\textbf{Online}(d,e)}{\textbf{OPT}(d,e)}\leq 1+\hyperpar+\frac{\left(1-\hyperpar\right)}{2}
\leq\frac{5+\hyperpar}{3}$.

$\bullet$ When $|X|=0$,  $\frac{\textbf{Online}(d,e)}{\textbf{OPT}(d,e)} = 1+\hyperpar \leq\frac{5+\hyperpar}{3}$.

\medskip
\noindent\textbf{Case 2: $|Q|=1$.} In this case, there is only one storage period $(t_{p(k_1)},t_{k_1})\in Q$
and $|X|\leq |Q|=1$.

If the end sentinel request $r_{k_1}$ is a \textbf{Type-3} request by our online algorithm, the copy during $(t_{p(k_1)},t_{k_1})$ is a regular copy. By Proposition \ref{pro-3}, there is no special copy during $(t_{p(k_1)},t_{k_1})$ and hence $|X| = 0$. It follows from (\ref{ratiobound}) that $\frac{\textbf{Online}(d,e)}{\textbf{OPT}(d,e)} \leq 1+\hyperpar \leq \frac{5+\hyperpar}{3}$. If $r_{k_1}$ is a \textbf{Type-4} request, we have $|R_4|\geq 1$ and thus $\frac{\textbf{Online}(d,e)}{\textbf{OPT}(d,e)}\leq 1+\hyperpar+\frac{\left(1-\hyperpar\right)-1}{1} = 1 < \frac{5+\hyperpar}{3}$. If $r_{k_1}$ is a \textbf{Type-1/2} request,  we have the following lemma.

\begin{Lemma}
When $|Q|=1$, if the only end sentinel request $r_{k_1}$ is a \textbf{Type-1/2} request, there are at least two \textbf{Type-1/2/4} requests in the partition, i.e., $|R_{1}| + |R_{2}| + |R_{4}|\geq 2$.
\label{Lem-b}
\end{Lemma}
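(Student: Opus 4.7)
The plan is to exhibit two Type-1/2/4 requests in the partition: the end sentinel $r_e = r_{k_1}$ (which is Type-1/2 by hypothesis) and the first intermediate request $r_{d+1}$. Two things must be established: (i) the partition actually contains an intermediate request, i.e.\ $d+1 \leq e-1$; and (ii) this $r_{d+1}$ is itself Type-1/2/4.

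For (i), I argue by contradiction. Suppose $d+1 = e$, so no request arises anywhere in $(t_d, t_e)$. By Proposition \ref{consistrequest} the Type-1/2 hypothesis gives $t_e - t_d > \trfrcost$, and right after $r_d$ the algorithm places a regular copy at $s[r_e]$ of duration $\hyperpar\trfrcost$, which expires at $t_d + \hyperpar\trfrcost$. If this copy is the sole one in the system at expiration, it turns special and---with no subsequent requests or transfers---persists until $t_e$, making $r_e$ a \textbf{Type-4} request and contradicting the hypothesis. Otherwise some other server $s^{***}$ still holds a copy at $t_d + \hyperpar\trfrcost$. Let $r_b$ be $s^{***}$'s most recent prior request, necessarily with $t_b < t_d$. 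Since no request at $s^{***}$ lies in $(t_b, t_e)$ and $t_e - t_b > \trfrcost$, Proposition \ref{consistrequest} gives the intended duration at $s^{***}$ as $\hyperpar\trfrcost$, so its regular phase ends at $t_b + \hyperpar\trfrcost < t_d + \hyperpar\trfrcost$. For $s^{***}$'s copy to still exist at $t_d + \hyperpar\trfrcost$, this regular phase must have transitioned into a special phase, i.e.\ $s^{***}$ must have been the sole copyholder at $t_b + \hyperpar\trfrcost$. A case split on whether $t_b + \hyperpar\trfrcost$ exceeds $t_d$ rules this out: in the former case $s[r_e]$'s fresh regular copy contradicts ``sole copyholder''; in the latter, either $s[r_e]$ has no copy at $t_d^-$ (so $r_d$'s transfer from $s^{***}$ drops its special copy via lines 15--19 of Algorithm \ref{alg3}) or $s[r_e]$ does have a copy at $t_d^-$, which cannot be reconciled with $s^{***}$ having been the sole copyholder at $t_b + \hyperpar\trfrcost$ since no request at $s[r_e]$ occurs in $(t_{p(d)}, t_d)$ to create or refresh such a copy. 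Every branch contradicts, so $d + 1 < e$.

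For (ii), since $p(e) = d$ places $r_d$ at $s[r_e]$ with no request at $s[r_e]$ strictly between $r_d$ and $r_e$, the request $r_{d+1}$ arises at some server $s[r_{d+1}] \neq s[r_e]$. If $r_{d+1}$ is the first request ever at $s[r_{d+1}]$, it must be served by a transfer and is thus Type-1 or Type-2. Otherwise its preceding request $r_{p(d+1)}$ at $s[r_{d+1}]$ exists with $p(d+1) < d$, so $t_{p(d+1)} < t_d$. Were $r_{d+1}$ Type-3, Proposition \ref{consistrequest} would give $t_{d+1} - t_{p(d+1)} \leq \trfrcost$, and Proposition \ref{prop5} would then force the optimal offline strategy to hold a copy at $s[r_{d+1}]$ throughout $(t_{p(d+1)}, t_{d+1})$. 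Since this copy lies at a server different from $s[r_e]$ and crosses $t_d$, it contradicts the partition-boundary property of $r_d$. Hence $r_{d+1}$ is Type-1, 2, or 4, and together with $r_e$ we obtain $|R_1| + |R_2| + |R_4| \geq 2$.

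The main obstacle is step (i): the delicate case analysis tracing the online state in the no-intermediate case to rule out a persistent special copy across the time $t_d$, which crucially relies on the algorithm's rule that a transfer out of a special copy drops that copy. Once step (i) is in hand, step (ii) is an essentially direct application of Propositions \ref{consistrequest} and \ref{prop5}.
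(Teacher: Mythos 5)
Your proof is correct and follows essentially the same route as the paper's: first show by contradiction that the partition must contain a request at a server other than $s[r_{k_1}]$ (otherwise the regular copy after $r_{p(k_1)}$ would outlive all other regular copies, turn special, and make $r_{k_1}$ a \textbf{Type-4} request), then use the partition-boundary property together with Propositions \ref{prop5} and \ref{consistrequest} to conclude that the first such request cannot be \textbf{Type-3}. Your step (i) is more laborious than the paper's (which simply notes that every other server's regular copy starts before $t_{p(k_1)}$ and hence expires, and is dropped, before the one at $s[r_{k_1}]$ does), and in the final branch of sub-case B2 the stated reason is slightly off when $t_{p(d)} > t_b + \hyperpar\trfrcost$ (there the contradiction comes from the transfer serving $r_{p(d)}$ dropping $s^{***}$'s special copy, not from the sole-copyholder condition at $t_b+\hyperpar\trfrcost$), but the argument is easily repaired with the tools you already invoke.
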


By (\ref{ratiobound}) and Lemma \ref{Lem-b}, we have $\frac{\textbf{Online}(d,e)}{\textbf{OPT}(d,e)}\leq 1+\hyperpar+\frac{1-\hyperpar}{2}
\leq \frac{5+\hyperpar}{3}$ if $r_{k_1}$ is a \textbf{Type-1/2} request.

In conclusion, for both cases of $|Q| \geq 2$ and $|Q| = 1$, the ratio $\frac{\textbf{Online}(d,e)}{\textbf{OPT}(d,e)}$ is bounded by $\frac{5+\hyperpar}{3}$.

\begin{figure}[htbp]
\centering
\includegraphics[width=10cm]{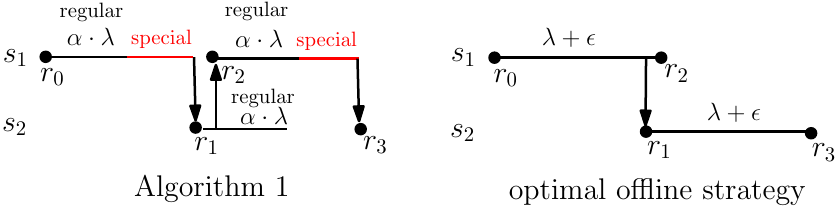}
\caption{A tight example for consistency analysis}
\label{AAAA7}
\end{figure}

Figure \ref{AAAA7} gives an example to show that the consistency analysis above is tight. Request $r_2$ arises at server $s_1$, and requests $r_1$ and $r_3$ arise at server $s_2$. The arising times of the requests are $t_1 = \trfrcost$, $t_2 = \trfrcost+ \epsilon$ and $t_3 = 2 \trfrcost + \epsilon$, where $\epsilon > 0$ is a small value. In the optimal offline strategy, both $r_2$ and $r_3$ are served locally, and $r_1$ is served by a transfer from $s_1$. So, the optimal cost is $3\trfrcost + 2\epsilon$. By our online algorithm, since the inter-request time at each server is longer than $\trfrcost$, the duration of each regular copy is $\hyperpar\trfrcost$ if all predictions are correct. When the regular copy after the dummy request $r_0$ expires, it switches to a special copy. Upon an outgoing transfer to serve $r_1$, the special copy is deleted. Then, $r_2$ has to be served by a transfer from the regular copy after $r_1$. Upon expiration, the regular copy after $r_2$ switches to a special copy since it is the only copy in the system. This special copy is later used to serve $r_3$ by a transfer. The total online cost is $5\trfrcost+\hyperpar\trfrcost$. Hence, the online-to-optimal cost ratio is $\frac{5\trfrcost+\hyperpar\trfrcost}{3\trfrcost + 2\epsilon}$, which approaches $\frac{5+\hyperpar}{3}$ as $\epsilon \rightarrow 0$. We note that this example can also be extended to a request sequence of arbitrary length by repeating $r_1$, $r_2$, $r_3$ (i.e., treating $r_3$ as $r_0$ of a new cycle with the roles of servers $s_1$ and $s_2$ swapped).

\section{Impact of Mispredictions}
\label{misprediction}

In this section, we study the impact of mispredictions on the competitive ratio of the proposed algorithm. For each request $r_i$, if the inter-request time between $r_{p(i)}$ and $r_{i}$ is mispredicted, we call $r_i$ a \textit{mispredicted request}. Otherwise, we call $r_i$ a \textit{correct request}. For each mispredicted request $r_i$, we investigate how $r_i$ may be affected and how correct requests may be affected.
Recall from Proposition \ref{costsummary} that the online cost allocated to a request consists of three possible parts: (1) the cost of a regular copy; (2) the cost of a special copy (if any); (3) the cost of a transfer (if any). We study the impact on these three parts.

We divide all the mispredicted requests into three sets $M_1$, $M_2$ and $M_3$ based on the real inter-request time between the mispredicted request and its preceding request.

For each $r_i\in M_1$, the inter-request time satisfies $t_i-t_{p(i)}\leq\hyperpar\trfrcost$. In the case of correct prediction, the duration $l_i$ of the regular copy after $r_{p(i)}$ is set to $\trfrcost$, so $r_i$ is a \textbf{Type-3} request since $t_i-t_{p(i)} \leq$ $\trfrcost$. In the case of misprediction, $l_i$ reduces to $\hyperpar\trfrcost$, and $r_i$ remains a \textbf{Type-3} request. Therefore, misprediction of the requests in $M_1$ does not affect the online cost.

For each $r_i\in M_2$, the inter-request time satisfies $\hyperpar\trfrcost<t_i-t_{p(i)}\leq\trfrcost$. In the case of correct prediction, $l_i$ is set to $\trfrcost$, so $r_i$ is a \textbf{Type-3} request since $t_i-t_{p(i)}\leq \trfrcost$. In the case of misprediction, $l_i$ reduces to $\hyperpar\trfrcost$, and $r_i$ becomes a \textbf{Type-1/2/4} request.
\begin{itemize}
\item By Proposition \ref{costsummary}, the transfer cost allocated to $r_i$ can increase by at most $\trfrcost$. The regular copy after $r_{p(i)}$ is shortened from $t_i-t_{p(i)}$ to $\hyperpar\trfrcost$, so the regular copy cost allocated to $r_i$ is reduced by $t_i-t_{p(i)} - \hyperpar\trfrcost$. Meanwhile, the cutback of the regular copy after $r_{p(i)}$ can create special copies at server $s[r_i]$ and other servers. By Proposition \ref{pro-3}, the total duration of all special copies created must be bounded by the duration cut of the regular copy after $r_{p(i)}$, i.e., $t_i-t_{p(i)} - \hyperpar\trfrcost$. The costs of the special copies created may be allocated to $r_i$ or correct requests, which would be offset by the aforesaid reduction in the regular copy cost allocated to $r_i$.
\item For correct requests, a \textbf{Type-1} correct request may change to \textbf{Type-2/4}. In this case, the transfer cost allocated to the request can never increase. A \textbf{Type-2} (resp. \textbf{Type-4}) correct request remains \textbf{Type-2} (resp. \textbf{Type-4}), since the special copy for serving the request is not affected by the cutback of the regular copy after $r_{p(i)}$. A \textbf{Type-3} correct request remains \textbf{Type-3}. In these cases, the transfer cost allocated to the request does not change.
In addition, the regular copy costs allocated to all correct requests do not change.
\item In summary, the total increase in the online cost due to each $r_i \in M_2$ is bounded by $\trfrcost$.
\end{itemize}

For each $r_i\in M_3$, the inter-request time $t_i-t_{p(i)}>\trfrcost$. In the case of correct prediction, $l_i$ is set to $\hyperpar\trfrcost$, so $r_i$ is a \textbf{Type-1/2/4} request since $t_i-t_{p(i)}>l_i$. In the case of misprediction, $l_i$ increases to $\trfrcost$, and $r_i$ remains a \textbf{Type-1/2/4} request. However, the specific type of $r_i$ may change.
\begin{itemize}
\item If $r_i$ is \textbf{Type-4} under correct prediction, it implies that there is no request at other servers between $r_{p(i)}$ and $r_i$, because $l_i = \hyperpar\trfrcost$ is the shortest intended duration of a regular copy. As a result, $r_i$ remains \textbf{Type-4} under misprediction. Thus, the possible type changes of  $r_i$ are from \textbf{Type-1/2} to \textbf{Type-1/2/4} only. By Proposition \ref{costsummary}, the transfer cost allocated to $r_i$ can never increase.
The regular copy cost allocated to $r_i$ increases by $\trfrcost-\hyperpar\trfrcost$. The extension of the regular copy after $r_{p(i)}$ at $s[r_i]$ can only shorten the special copy at $s[r_i]$ or any other server. Hence, the special copy cost allocated to $r_i$ can never increase. Overall, the online cost allocated to $r_i$ can increase by at most $\trfrcost-\hyperpar\trfrcost$.
\item For correct requests, a \textbf{Type-1} (resp. \textbf{Type-3}) correct request remains \textbf{Type-1} (resp. \textbf{Type-3}). Thus, its allocated online cost does not change. A \textbf{Type-2} correct request may change to \textbf{Type-1}, where its allocated online cost can only decrease. A \textbf{Type-4} correct request may change to \textbf{Type-1/2}, where its allocated online cost can increase by at most the cost $\trfrcost$ of a transfer. 
Note that if a request $r_j$ is a \textbf{Type-4} correct request, we must have $t_j - t_{p(j)} > \trfrcost$ (otherwise, the regular copy after $r_{p(j)}$ would be $\trfrcost$ long and $r_j$ must be a \textbf{Type-3} request) and no request arises elsewhere between $r_{p(j)}$ and $r_j$. Thus, at most one \textbf{Type-4} request can change to \textbf{Type-1/2}. 

\item In summary, the total increase in the online cost due to each $r_i \in M_3$ is bounded by $(2 - \hyperpar) \cdot \trfrcost$.
\end{itemize}

By the above analysis, the total increase in the online cost due to all mispredicted requests is bounded by $\trfrcost\cdot |M_2| + (2 - \hyperpar) \cdot \trfrcost \cdot |M_3|$.

Consider an optimal offline strategy of a request sequence $\langle r_{1},r_{2},$ $\dots,r_{m} \rangle$. For each request $r_i$, if $t_i - t_{p(i)} \leq \trfrcost$, by Proposition \ref{prop5}, server $s[r_i]$ holds a copy throughout the period $(t_{p(i)}, t_i)$ to serve $r_i$, which incurs a storage cost of $t_i-t_{p(i)}$.
If $t_i - t_{p(i)} > \trfrcost$, either server $s[r_i]$ holds a copy throughout the period $(t_{p(i)}, t_i)$ to serve $r_i$ or an inward transfer to server $s[r_i]$ is performed to serve $r_i$. In both cases, the cost incurred is at least $\trfrcost$. Moreover, due to the at-least-one-copy requirement, the storage cost incurred between two successive requests $r_{i-1}$ and $r_i$, i.e., over the period $(t_{i-1}, t_i)$, must be at least $t_i - t_{i-1}$. If $t_i - t_{i-1} > \trfrcost$, the portion of the storage cost beyond $\trfrcost$ is not counted in the aforesaid lower bound. Hence, the optimal offline cost is at least $\sum_{i: t_i-t_{p(i)} > \trfrcost} \trfrcost+\sum_{i: t_i-t_{p(i)} \leq \trfrcost}(t_i-t_{p(i)})+\sum_{i:t_i - t_{i-1} > \trfrcost}(t_i - t_{i-1} - \trfrcost)$. Thus, with mispredictions, the increase in the online-to-optimal cost ratio is bounded by
\begin{equation}
\label{eq:panelty}
\frac{\trfrcost\cdot |M_2| + (2 - \hyperpar) \cdot \trfrcost \cdot |M_3|}{\sum\limits_{i: t_i-t_{p(i)} > \trfrcost} \!\!\trfrcost+\!\!\sum\limits_{i: t_i-t_{p(i)} \leq \trfrcost}\!\!(t_i-t_{p(i)})+\!\!\sum\limits_{i:t_i - t_{i-1} > \trfrcost}\!\!(t_i - t_{i-1} - \trfrcost)}.
\end{equation}

In Sections \ref{sec:robustness} and \ref{sec:consistency}, we have shown that Algorithm \ref{alg3} has a tight robustness of $1+\frac{1}{\hyperpar}$ and a tight consistency of $\frac{5+\hyperpar}{3}$. Such a consistency-robustness tradeoff however is not attractive. In order to achieve a better consistency, we have to reduce the hyper-parameter $\hyperpar$, but the robustness goes towards infinity as $\hyperpar$ decreases. On the other hand, when $\hyperpar = 1$ (setting the intended duration of the regular copy after each request to $\trfrcost$, irrespective of the prediction), the consistency and robustness become the same which is $2$. This is in fact the best achievable competitive ratio of a conventional online algorithm \cite{2021Cost}. Note that in our data replication problem, when a request arises, we will find out whether a previous prediction is correct or not. This makes it possible to optimize the robustness-consistency trade-off by changing the $\hyperpar$ setting when needed.
An intuitive idea is to start with setting a small $\hyperpar < 1$ and reset $\hyperpar$ to $1$ once a misprediction is found. Nevertheless, this strategy is not practically useful because it cannot make use of most predictions unless all predictions are correct. Im \textit{et al.} \cite{im2021nonclairvoyant, im2023online} addressed online problems with such temporal aspects by defining prediction error measures based on the difference in the optimal objective value between the predictions and ground truth. In a similar spirit, we propose to leverage the above analysis of mispredictions to monitor (an upper bound of) the online-to-optimal cost ratio as time goes on and adjust $\hyperpar$ dynamically to achieve a good consistency while maintaining a bounded robustness. 

Specifically, we maintain a lower bound of the optimal cost as $\textbf{OPTL} = \sum_{i: t_i-t_{p(i)} > \trfrcost} \trfrcost+\sum_{i: t_i-t_{p(i)} \leq \trfrcost}(t_i-t_{p(i)})+\sum_{i:t_i - t_{i-1} > \trfrcost}(t_i - t_{i-1} - \trfrcost)$, i.e., the denominator of (\ref{eq:panelty}). When a new request $r_i$ arises, we update $\textbf{OPTL}$ based on $t_i - t_{p(i)}$ and $t_i - t_{i-1}$.
We also maintain an upper bound of the online cost $\textbf{OnlineU}$ as the sum of two parts. The first part is the online costs allocated to all requests that have arisen (as given in Proposition \ref{costsummary}). The second part is a conservative estimate of the costs beyond the last seen request at each server by treating the prediction after the last seen request as a misprediction. According to the above analysis, for each $r_i \in M_2$, the correct duration of the regular copy after $r_{p(i)}$ is $\trfrcost$ and the penalty due to misprediction is bounded by $\trfrcost$, so the total cost is at most $2 \trfrcost$; and for each $r_i \in M_3$, the correct duration of the regular copy after $r_{p(i)}$ is $\hyperpar \trfrcost$ and the penalty due to misprediction is bounded by $(2-\hyperpar) \trfrcost$, so the total cost is also at most $2 \trfrcost$. Hence, the second part of the online cost is estimated as $2 \trfrcost n'$, where $n'$ is the number of servers that have received requests. When a new request $r_i$ arises, we update $\textbf{OnlineU}$. Suppose we would like to achieve a robustness of $2 + \beta$ where $\beta \geq 0$. If $\frac{\textbf{OnlineU}}{\textbf{OPTL}} > 2 + \beta$, we apply the conventional online algorithm by setting the intended duration of the regular copy after $r_i$ to $\trfrcost$, irrespective of the prediction. If $\frac{\textbf{OnlineU}}{\textbf{OPTL}} \leq 2 + \beta$, we apply Algorithm \ref{alg3} by setting the intended duration of the regular copy after $r_i$ to $\hyperpar \trfrcost$ or $\trfrcost$ according to the prediction.
In this way, we can maintain a bounded robustness.

\section{Lower Bound of Consistency}
In this section, we establish a lower bound of $\frac{3}{2}$ on the consistency of any deterministic learning-augmented algorithm for our problem. This implies that it is not possible to achieve a consistency approaching $1$ in our problem.

Consider two servers. The inter-request times at each server are always longer than $\trfrcost$. Correct predictions (i.e., forecasting the next request at each server to arise beyond a period of $\trfrcost$ from the previous request) are input to the learning-augmented algorithm all the time. Initially there is only one data copy placed at one server, and a dummy request $r_0$ arises at this server at time 0. The request $r_1$ arrives at the other server right after time 0 (i.e., $t_1=\epsilon$ is a small value), so that in any online or offline strategy, $r_1$ is served by a transfer, and both servers hold data copies immediately thereafter. 

\begin{figure}[t]
\centering
\includegraphics[width=12cm]{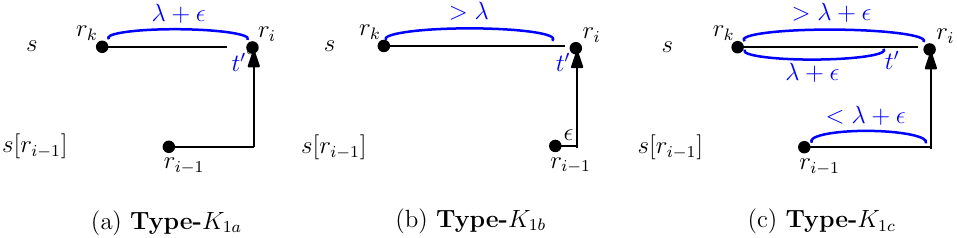}
\caption{\label{typek2} $r_i$ is a \textbf{Type-$K_1$} request}
\end{figure}

The adversary generates subsequent requests according to the behaviour of the online (learning-augmented) algorithm in the following way. After a request $r_{i-1}$ $(i\geq 2)$ (to facilitate presentation, we use $s$ to denote the server other than $s[r_{i-1}]$, and assume the last request at $s$ before $r_{i-1}$ is $r_k$ where $k<i-1$),

\begin{itemize}
\item If $s$ does not hold a data copy at time $t':=\max\{t_{i-1}+\epsilon, t_{k}+\trfrcost+\epsilon\}$ ($\epsilon$ is a small value), the adversary generates the next request $r_i$ in server $s$ at time $t'$, so that $r_i$ must be served by a transfer from $s[r_{i-1}]$. If $t'= t_{k}+\trfrcost+\epsilon$, we call $r_i$ a \textbf{Type-$K_{1a}$} request (see Figure \ref{typek2}(a)).\footnote{There might be other transfers between $s$ and $s[r_{i-1}]$ during the period $(t_{i-1}, t')$ by the online algorithm. To simplify illustration, they are not marked in Figures \ref{typek2} and \ref{typek1}.}
If $t' = t_{i-1}+\epsilon > t_{k}+\trfrcost+\epsilon$, we call $r_i$ a \textbf{Type-$K_{1b}$} request (see Figure \ref{typek2}(b)). 
\item If $s$ holds a data copy at time $t'$, the adversary monitors whether $s$ drops its copy during the period $(t', t_{i-1}+\trfrcost)$. Once $s$ drops its copy at some time instant $t^{*}\in(t', t_{i-1}+\trfrcost)$, the adversary generates the next request $r_i$ in server $s$ at time $t^{*}+\epsilon$ (i.e., right after $s$ drops its copy), so that $r_i$ must be served by a transfer from $s[r_{i-1}]$ (see Figure \ref{typek2}(c)). In this case, we also call $r_i$ a \textbf{Type-$K_{1c}$} request.

In all the above cases, $r_i$ and $r_{i-1}$ are at different servers, so $p(i)=k$. We collectively call $r_i$ a \textbf{Type-$K_1$} request in these cases. It is easy to make the following observation.
\end{itemize}
\begin{observation}
If $r_i$ is a \textbf{Type-$K_{1}$} request, $t_i-t_{i-1}<\trfrcost+\epsilon$. If $r_i$ is a \textbf{Type-$K_{1a}$} request, $t_i-t_{p(i)}=\trfrcost+\epsilon$. If $r_i$ is a \textbf{Type-$K_{1b}$} or \textbf{Type-$K_{1c}$} request, $t_i-t_{p(i)}>\trfrcost+\epsilon$.
\label{requestk2}
\end{observation}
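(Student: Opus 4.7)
The plan is to verify the observation by simply unfolding the three sub-case definitions of Type-$K_1$ and reading off both $t_i - t_{i-1}$ and $t_i - t_{p(i)}$ in each case. The key preliminary remark is that under all three sub-cases, $r_i$ arises at the server $s \neq s[r_{i-1}]$, so the preceding request at $s$ is $r_k$ and thus $p(i) = k$. All inequalities will then follow from arithmetic on $t'$, $t_k$, $t_{i-1}$, and $\trfrcost$.

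For Type-$K_{1a}$, by construction $t_i = t' = t_k + \trfrcost + \epsilon$, which immediately gives $t_i - t_{p(i)} = \trfrcost + \epsilon$. For the bound on $t_i - t_{i-1}$, the definition forces $t_k + \trfrcost + \epsilon \geq t_{i-1} + \epsilon$, i.e., $t_k + \trfrcost \geq t_{i-1}$. Since all request times are distinct and $k < i-1$ gives $t_k < t_{i-1}$ strictly, subtracting yields $t_i - t_{i-1} = (t_k - t_{i-1}) + \trfrcost + \epsilon < \trfrcost + \epsilon$.

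For Type-$K_{1b}$, we have $t_i = t' = t_{i-1} + \epsilon > t_k + \trfrcost + \epsilon$. The first equality gives $t_i - t_{i-1} = \epsilon < \trfrcost + \epsilon$, and the strict inequality gives $t_i - t_{p(i)} = t_{i-1} + \epsilon - t_k > \trfrcost + \epsilon$. For Type-$K_{1c}$, $t_i = t^{*} + \epsilon$ with $t^{*} \in (t', t_{i-1} + \trfrcost)$. The upper bound on $t^{*}$ yields $t_i - t_{i-1} < \trfrcost + \epsilon$, and since $s$ held a copy at $t'$ we must have $t' \geq t_k + \trfrcost + \epsilon$, hence $t^{*} > t_k + \trfrcost + \epsilon$ and $t_i - t_{p(i)} = t^{*} + \epsilon - t_k > \trfrcost + 2\epsilon > \trfrcost + \epsilon$.

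There is no real obstacle here: the statement is a direct consequence of how the adversary chooses $t'$ (or $t^{*}$) in each sub-case, so the whole argument reduces to three short calculations. The only subtlety worth flagging is to use $k < i-1 \Rightarrow t_k < t_{i-1}$ to turn the non-strict defining inequality of Type-$K_{1a}$ into the strict bound claimed for $t_i - t_{i-1}$, and to invoke the hypothesis ``$s$ holds a data copy at time $t'$'' in Type-$K_{1c}$ to pin down $t' \geq t_k + \trfrcost + \epsilon$ (otherwise $s$'s most recent copy, created at or before $t_k$, could not still be present at $t'$).
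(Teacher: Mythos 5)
Your proof is correct and matches the paper's intent: the paper states this observation without proof (``it is easy to make the following observation''), and your three case-by-case calculations from the definitions of $t'$ and $t^{*}$ are exactly the verification that is implied. One tiny simplification: in the Type-$K_{1c}$ case you do not need the hypothesis that $s$ holds a copy to conclude $t' \geq t_k + \trfrcost + \epsilon$; this holds automatically since $t'$ is defined as $\max\{t_{i-1}+\epsilon,\, t_k+\trfrcost+\epsilon\}$.
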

\begin{itemize}
\item If $s$ persistently keeps its copy during the period $(t', t_{i-1}+\trfrcost)$, the adversary generates the next request $r_i$ in server $s[r_{i-1}]$ at time $t_{i-1}+\trfrcost+\epsilon$. In this case, $r_i$ and $r_{i-1}$ are at the same server. We call such $r_i$ a \textbf{Type-$K_{2}$} request. If $s[r_{i-1}]$ holds a data copy when $r_i$ arises, $r_i$ is served locally (see Figure \ref{typek1}(a)). Otherwise, $r_i$ is served by a transfer from $s$ (see Figure \ref{typek1}(b)). 
\end{itemize}

\begin{observation}
If $r_i$ is a \textbf{Type-$K_{2}$} request, $t_i - t_{i-1}=\trfrcost+\epsilon$.
\label{requestk1}
\end{observation}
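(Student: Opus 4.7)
The plan is to derive the equality directly from the adversary's construction, since the statement is essentially a definitional unpacking. First, I would recall the case analysis immediately preceding the observation: $r_i$ is labelled \textbf{Type-$K_2$} in exactly the subcase where server $s$ (the server other than $s[r_{i-1}]$) persistently retains its data copy throughout the period $(t', t_{i-1}+\trfrcost)$. In that subcase the adversary is specified to issue $r_i$ at server $s[r_{i-1}]$ at the time instant $t_i=t_{i-1}+\trfrcost+\epsilon$, from which $t_i-t_{i-1}=\trfrcost+\epsilon$ follows by a single subtraction.

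The only item worth checking explicitly is that the Type-$K_2$ branch is mutually exclusive with the three Type-$K_1$ subcases, so that there is no ambiguity in the value of $t_i$. This is immediate from the construction: the \textbf{Type-$K_{1a}$}, \textbf{Type-$K_{1b}$}, and \textbf{Type-$K_{1c}$} subcases cover, respectively, the situations where $s$ does not hold a copy at the threshold time $t'=\max\{t_{i-1}+\epsilon, t_k+\trfrcost+\epsilon\}$ (split according to which of the two terms attains the maximum) and where $s$ holds a copy at $t'$ but drops it at some instant $t^\ast\in(t', t_{i-1}+\trfrcost)$. The Type-$K_2$ branch is precisely the complementary case of persistent storage throughout $(t', t_{i-1}+\trfrcost)$. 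These four subcases are disjoint and exhaustive, so the Type-$K_2$ scenario determines $t_i$ uniquely as $t_{i-1}+\trfrcost+\epsilon$, as needed.

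There is no genuine obstacle here: the proof is a one-line unwrapping of the adversarial definition, parallel in spirit to the trivial directions of Observation~\ref{requestk2}. The value of the observation is not in its difficulty but in the fact that it pins the inter-arrival gap to $\trfrcost+\epsilon$; this will be exploited in the subsequent consistency lower bound argument to force the online algorithm into a dilemma between paying roughly $\trfrcost$ in storage cost to carry a copy across $(t_{i-1},t_i)$ at $s[r_{i-1}]$ and paying $\trfrcost$ in transfer cost to serve $r_i$ from $s$, while the offline optimum can still benefit from the gap exceeding $\trfrcost$ only by the negligible amount $\epsilon$.
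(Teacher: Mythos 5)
Your proof is correct and matches the paper's treatment: the observation is an immediate consequence of the adversary's definition of the \textbf{Type-$K_2$} case, which places $r_i$ at $s[r_{i-1}]$ at time exactly $t_{i-1}+\trfrcost+\epsilon$, and the paper accordingly states it without further argument. Your additional check that the four subcases are disjoint and exhaustive is a reasonable (if unstated in the paper) sanity check, but the core step is the same one-line unwrapping.
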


\begin{figure}[htbp]
\centering
\includegraphics[width=9cm]{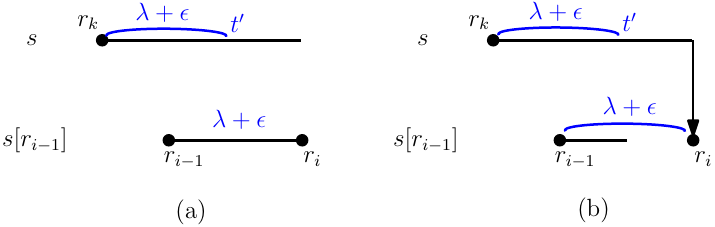}
\caption{\label{typek1} $r_i$ is a \textbf{Type-$K_2$} request}
\end{figure}

We use $\textbf{online}(r_i, r_{j})$ to denote the total storage and transfer cost produced by the online algorithm over the period $(t_{i}, t_j]$. The proof of the following observation is given in Appendix \ref{prooflb}.

\begin{observation}\
\begin{itemize}
\item If $r_i$ is a \textbf{Type-$K_{1a}$} request, $\textbf{online}(r_{i-1}, r_{i})\geq(t_i-t_{i-1})+\trfrcost$.
\item If $r_i$ is a \textbf{Type-$K_{1b}$} or \textbf{Type-$K_{1c}$} request, $\textbf{online}(r_{i-1}, r_{i})\geq2(t_i-t_{i-1})+\trfrcost-\epsilon$.
\item If $r_i$ is a \textbf{Type-$K_2$} request, $\textbf{online}(r_{i-1}, r_{i})\geq2\trfrcost+\epsilon$.
\end{itemize}
\label{interobs}
\end{observation}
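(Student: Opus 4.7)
The plan is to lower-bound $\textbf{online}(r_{i-1},r_i)$ by combining three ingredients: (i) the at-least-one-copy requirement forces storage $\geq t_i-t_{i-1}$ over $(t_{i-1},t_i]$; (ii) whenever $r_i$ is served by a transfer, that contributes $\trfrcost$; (iii) the adversary's rule for each type pins down additional subintervals during which $s$ must hold a copy, which adds either direct storage or, if the online briefly relinquishes the copy, a forced re-acquisition transfer of cost $\trfrcost$. The unifying device in all case analyses is the trade-off: since each of $s$ and $s[r_{i-1}]$ must hold a copy at specified endpoints, each either keeps its copy continuously across the implied subinterval (contributing the corresponding storage) or drops and re-acquires it (contributing an extra transfer of $\trfrcost$); either branch suffices.

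For Type-$K_{1a}$ the bound is immediate: $r_i$ forces a transfer into $s$ ($+\trfrcost$) and at-least-one-copy storage gives $+(t_i-t_{i-1})$. For Type-$K_{1b}$ the same ingredients, plus $t_i-t_{i-1}=\epsilon$, rewrite to $\trfrcost+\epsilon = 2(t_i-t_{i-1})+\trfrcost-\epsilon$. For Type-$K_{1c}$ I use that $s$ holds continuously throughout $(t',t^*)$ and $t_i=t^*+\epsilon$. If $s$ held since $t_{i-1}^+$, then $s$ contributes $t^*-t_{i-1}=t_i-t_{i-1}-\epsilon$ storage, which combined with $s[r_{i-1}]$'s storage of $t_i-t_{i-1}$ (forced by its copies at both $t_{i-1}^+$ and $t_i^-$) and the serve-$r_i$ transfer yields $2(t_i-t_{i-1})+\trfrcost-\epsilon$; otherwise $s$ must have re-acquired before $t'$, giving at least two transfers in $(t_{i-1},t_i]$, and combining with at-least-one-copy storage $\geq t_i-t_{i-1}$ and the bound $t_i-t_{i-1}<\trfrcost+\epsilon$ from Observation~\ref{requestk2} again closes the inequality. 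A parallel sub-analysis on whether $s[r_{i-1}]$ drops and re-acquires handles the remaining branches.

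For Type-$K_2$, with $t_i-t_{i-1}=\trfrcost+\epsilon$ and $s$ holding throughout $(t',t_{i-1}+\trfrcost)$, the cheapest configuration has $s$ holding continuously from before $r_{i-1}$ (contributing $\geq \trfrcost$ storage in the period) and $s[r_{i-1}]$ holding continuously throughout (contributing $\trfrcost+\epsilon$), summing to $2\trfrcost+\epsilon$ even without any transfer; every alternative trades a missing storage chunk for either a re-acquisition transfer (of $\trfrcost$) or, if $s[r_{i-1}]$ drops and never re-acquires, for the serve-$r_i$ transfer from $s$, and the bound still holds. I expect the bookkeeping in Types $K_{1c}$ and $K_2$ to be the main obstacle, because the online algorithm has discretionary drop/re-acquire moves and may even perform gratuitous transfers; the resolution is the uniform observation that any deletion followed by re-acquisition at the same server forces an additional $\trfrcost$-transfer, which exactly absorbs the storage lost relative to the continuous-hold baseline.
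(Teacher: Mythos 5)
Your proposal follows essentially the same route as the paper's proof: decompose the cost into per-server storage plus transfers, invoke the at-least-one-copy requirement for a global storage lower bound of $t_i - t_{i-1}$, and for Types $K_{1c}$ and $K_2$ split on whether each server holds its copy continuously or drops and re-acquires it, using $t_i - t_{i-1} < \trfrcost + \epsilon$ (Observation \ref{requestk2}) to show that the forced re-acquisition transfer compensates for the lost storage. One minor caution: a re-acquisition transfer of $\trfrcost$ absorbs the lost storage only up to an additive $\epsilon$ (the gap can be as long as $t_i - t_{i-1} < \trfrcost + \epsilon$), so in the tight Type-$K_2$ case you should, as the paper does, fall back on the global accounting (one transfer plus total system storage $t_i - t_{i-1} = \trfrcost + \epsilon$) rather than per-server chunk-for-transfer bookkeeping.
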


To analyze the online-to-optimal cost ratio, we partition the whole request sequence into subsuquences. Each subsequence consists of one or two requests.

\textbf{Case 1:} A subsequence $\langle r_{i-1}, r_i \rangle$, where $r_{i-1}$ is \textbf{Type-$K_{2}$} and $r_{i}$ is \textbf{Type-$K_{1}$}.

\textbf{Case 2:} A subsequence $\langle r_{i-1}, r_i \rangle$ of two successive \textbf{Type-$K_{1}$} requests. 

\textbf{Case 3:} A subsequence $\langle r_i \rangle$ of a single \textbf{Type-$K_{2}$} request.

For each subsequence, we construct an offline strategy and show that the cost ratio between the online algorithm and the offline strategy is at least $\frac{3}{2}$ or can be made arbitrarily close to $\frac{3}{2}$ as $\epsilon \rightarrow 0$.
This then gives rise to a lower bound $\frac{3}{2}$ on the online-to-optimal cost ratio for the whole request sequence.
Please refer to Appendix \ref{subsequence} for details.
Therefore, the consistency of any deterministic learning-augmented algorithm is at least $\frac{3}{2}$.

\section{Experimental Evaluation}

We conduct simulation experiments to evaluate our algorithms using real data access traces. The results (1) verify our robustness and consistency analysis; (2) show that our algorithms can make effective use of predictions to improve performance; and (3) demonstrate that the adapted Algorithm \ref{alg3} (presented in Section \ref{misprediction}) can achieve a specified robustness. Due to space limitations, we defer the detailed experimental setup and results to Appendix \ref{experiment}.

\section{Concluding Remarks}
\label{remarks}

Note that when $\hyperpar = 1$ (i.e., not using predictions), our algorithm achieves a competitive ratio of $2$, which is better than the competitive ratio of $3$ by the algorithm proposed in Wang \textit{et al.}~\cite{wang2018cost}. 

Recently, Wang \textit{et al.} \cite{2021Cost} presented an online algorithm for a multi-server system where the servers can have distinct storage costs, and claimed that the algorithm is $2$-competitive. We find that the claim is not true. The competitive ratio of this algorithm is at least $\frac{5}{2}$, even if all servers have the same storage cost.

Let $\mu(s_i)$ denote the cost for storing a data copy in server $s_i$ per time unit. Assume that the servers are indexed in ascending order of storage cost rate, i.e., $\mu(s_1) \leq \mu(s_2) \leq \cdots \leq \mu(s_n)$. The main idea of Wang \textit{et al.}'s algorithm is as follows.
\begin{itemize}
\item For each server $s_{i}$, after serving a local request (either by a transfer or by a local copy), $s_{i}$ keeps the data copy for $\frac{\trfrcost}{\mu(s_i)}$ time units. Note that the cost of storing the copy in $s_i$ over this period matches the cost of transferring the object.
\item If a new request arises at $s_i$ in this period, the request is served by the local copy and $s_i$ keeps the copy for another $\frac{\trfrcost}{\mu(s_i)}$ time units (starting from the new request).
\item When the data copy in $s_{1}$ expires (i.e., the server with the lowest storage cost rate), the algorithm checks whether $s_{1}$ holds the only copy in the system. If so, $s_{1}$ continues to keep the copy for another $\frac{\trfrcost}{\mu(s_1)}$ time units. Otherwise, $s_{1}$ drops its copy.
\item When the data copy in $s_{i}$ $(i \neq 1)$ expires, the algorithm also checks whether $s_{i}$ holds the only copy. If not, $s_{i}$ drops its copy. If $s_i$ holds the only copy, the algorithm further checks whether $s_{i}$ has kept the copy for $\frac{\trfrcost}{\mu(s_i)}$ time units since its most recent local request (i.e., there is no request at $s_i$ for $\frac{\trfrcost}{\mu(s_i)}$ time). If so, $s_i$ continues to keep the copy for another $\frac{\trfrcost}{\mu(s_i)}$ time units. Otherwise, it implies that $s_{i}$ has kept the copy for $\frac{2\trfrcost}{\mu(s_i)}$ time units without serving any local request. In that case, $s_{i}$ transfers the object to $s_{1}$ and drops the local copy.
\end{itemize}

Figure \ref{wang} gives a counterexample to the claim of the competitive ratio $2$, where two servers $s_{1}$ and $s_{2}$ both have storage cost rates $1$.
There are $m$ requests. Requests $r_1$ arises at $s_{1}$ and all subsequent requests arise at $s_{2}$.
The arising times of the requests are $t_{1}=0$, $t_{2}=\epsilon$, $t_{3}=2\trfrcost+2\epsilon$, $t_{4}=4\trfrcost+3\epsilon$, $t_{5}=6\trfrcost+4\epsilon$ and so on,\footnote{To simplify boundary conditions, Wang \textit{et al.} \cite{2021Cost} assumed that (1) the object is initially stored in $s_1$ (the server with the lowest storage cost rate)
and (2) the first request arises at $s_1$ at time $0$. Our example follows these assumptions.} where $\epsilon > 0$ is a small value and every two consecutive requests at $s_2$ are slightly longer than $2\trfrcost$ apart.

\begin{figure}[t]
\centering
\includegraphics[width=8cm]{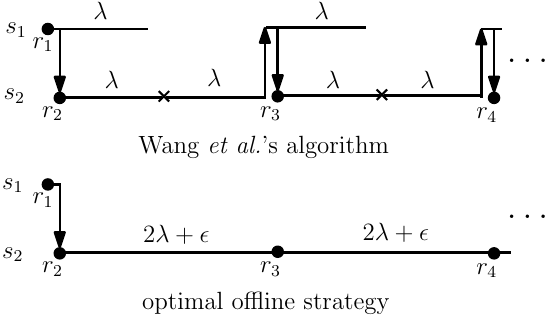}
\caption{\label{wang} A counterexample of Wang \textit{et al.}'s algorithm}
\end{figure}

By Wang \textit{et al.}'s algorithm, after request $r_{1}$, the copy in $s_{1}$ would expire at time $t_1 + \trfrcost = \trfrcost$, and after request $r_2$, the copy in $s_2$ would expire at time $t_2 + \trfrcost = \trfrcost + \epsilon > \trfrcost$ which is after $s_1$'s copy expires. Hence, $s_1$ drops its copy when it expires. When $s_2$'s copy expires, since it is the only copy in the system, it is renewed for another $\trfrcost$ time units.
When the renewal expires, $s_2$ transfers the object to $s_1$ and drops the local copy. Then, the copy in $s_1$ would expire at time $3\trfrcost+\epsilon$. When request $r_{3}$ arises at $s_2$, $s_1$ transfers the object to $s_2$ to serve $r_3$, after which the copy in $s_2$ would expire at time $t_3 + \trfrcost = 3\trfrcost + 2\epsilon > 3\trfrcost + \epsilon$ which is again after $s_1$'s copy expires. Hence, $s_1$ drops its copy when it expires. When $s_2$'s copy expires, it is renewed for another $\trfrcost$ time units. When the renewal expires, $s_2$ transfers the object to $s_1$ and drops the local copy. Then, the copy in $s_1$ would expire at time $5\trfrcost+2\epsilon$.
When request $r_{4}$ arises at $s_2$, $s_1$ transfers the object to $s_2$ to serve $r_4$, after which the copy in $s_2$ would expire at time $t_4 + \trfrcost = 5\trfrcost + 3\epsilon > 5\trfrcost + 2\epsilon$ which is again after $s_1$'s copy expires.
This pattern is repeated continuously. As a result, the total cost of serving all requests is at least $(m-2)\cdot5\trfrcost$ (where we include the cost incurred up to the final request $r_m$ only for fair comparison with the optimal offline solution).

In the optimal offline solution, server $s_2$ should always keep a data copy after request $r_2$ is served by a transfer. So, the total cost is $(m-2)\cdot (2\trfrcost+\epsilon) + \trfrcost + \epsilon$.

It is easy to see that as $m\rightarrow\infty$ and $\epsilon\rightarrow 0$, the ratio between the online cost and the optimal offline cost approaches $\frac{5}{2}$.

\newpage
\appendix
\section{Proof of Proposition \ref{pro-1}}
\label{proofreqatsrcdes}
\begin{proof}
Since one data copy is initially placed in server $s_1$
where the dummy request $r_0$ arises at time $0$, no transfer is needed to serve $r_0$.
Thus, we consider only transfers done after time $0$.

Suppose that a transfer from a server $s_x$ to a server $s_y$ is
carried out at time $t$ in an optimal replication strategy, and
there is no request at servers $s_x$ and $s_y$ at time $t$.
It can be inferred that either (1) a data copy is held by $s_x$ before
the transfer, or (2) $s_x$ just receives the data object from
another server $s_z$ at time $t$. In the latter case, since there is
no request at $s_x$ at time $t$, the transfer from $s_x$ to $s_y$
can be replaced by a transfer from $s_z$ to $s_y$ without affecting
the cost of the replication strategy (see Figure \ref{trans1} for an illustration). If $s_z$ does not have a local
request at time $t$ and does not hold a copy before time $t$, we
can find another server transferring to $s_z$ at time $t$ and repeat
the replacement until a source server of the transfer having a local
request at time $t$ or holding a copy before time $t$ is found.
So without loss of generality, we  assume that $s_x$ holds a copy before the transfer.

\begin{figure}[htbp]
\centering
\includegraphics[width=6.5cm]{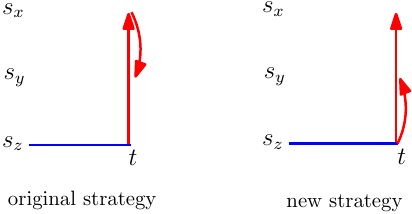}
\caption{\label{trans1} Source server holds a copy before the transfer}
\end{figure}

Similarly, since there is no request at $s_y$ at time $t$, if $s_y$
does not hold a copy after the transfer, $s_y$ must be sending
the data object to another server $s_z$ at time $t$ (otherwise, the
transfer from $s_x$ to $s_y$ can be removed, which contradicts the
optimality of the replication strategy). We can then replace the
transfers from $s_x$ to $s_y$ and from $s_y$ to $s_z$ by a single
transfer from $s_x$ to $s_z$ without affecting the ability of the
replication strategy to serve all the requests, which again
contradicts the optimality of the replication strategy (see Figure \ref{trans2} for an illustration). Thus, $s_y$ must hold a copy after the transfer.

\begin{figure}[htbp]
\centering
\includegraphics[width=7cm]{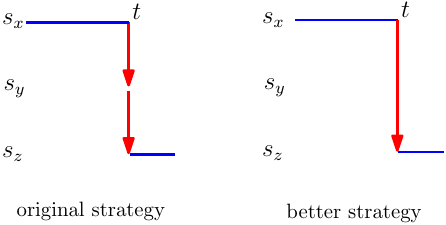}
\caption{\label{trans2} Destination server holds a copy after the transfer}
\end{figure}

If $s_x$ continues to hold a copy after the transfer, we can
delay the transfer from $s_x$ to $s_y$ by some small $\epsilon >
0$ and create a copy at $s_y$ at time $t+\epsilon$ to save
the storage cost (see Figure \ref{delaytransfer} for an illustration). This would not affect the feasibility of the
replication strategy since all the transfers originating from $s_y$
during the period $(t, t+\epsilon)$ can originate from $s_x$
instead. As a result, it contradicts the optimality of the replication strategy.
\begin{figure}[htbp]
\centering
\includegraphics[width=7cm]{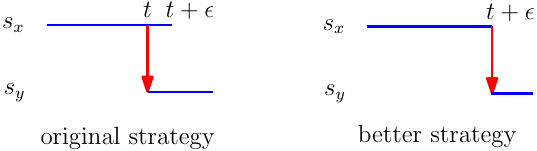}
\caption{\label{delaytransfer} Transfer can be delayed if $s_x$ keeps its copy after $t$}
\end{figure}

Now suppose that $s_x$ does not hold a copy after the transfer. Then it can be shown that the copy held by $s_x$ before the transfer serves at least one local request at $s_x$. Otherwise, if the copy held by $s_x$ before the transfer
does not serve any local request at $s_x$, let $t^\prime$ denote the creation
time of the copy. At time $t^\prime$, there must be a transfer
from another server $s_z$ to $s_x$. We can replace the two transfers from $s_z$ to $s_x$ at time $t^\prime$ and
from $s_x$ to $s_y$ at time $t$ by one transfer from $s_z$ to $s_y$ at time
$t^\prime$, remove the copy at $s_x$ and add a copy at $s_y$
during $(t^\prime, t)$, which reduces the total cost,  contradicting the optimality of the replication strategy (see Figure \ref{eg1} for an illustration). Thus, the copy held by $s_x$ before the transfer serves at least one local request at $s_x$.
We look for the last such local request of $s_x$. Let $t^*$ denote the time of this request. Then we can
bring the transfer from $s_x$ to $s_y$ earlier to time $t^*$,
remove the copy at $s_x$ and add a copy at $s_y$ during the
period $(t^*, t)$ (see Figure \ref{eg2} for an illustration). This would not increase the cost of the
replication strategy, and would not affect the service of other requests because all transfers originating from $s_x$ during $(t^*, t)$ can originate from $s_y$ instead.
In the new strategy, there is a request at $s_x$ at the time of the transfer from $s_x$ to $s_y$.
\end{proof}

\begin{figure}[htbp]
\centering
\includegraphics[width=7cm]{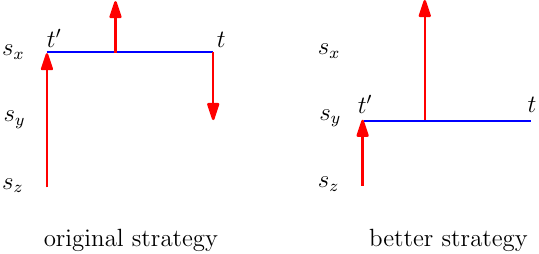}
\caption{\label{eg1} The copy at $s_x$ must serve at least one local request}
\end{figure}

\begin{figure}[htbp]
\centering
\includegraphics[width=7cm]{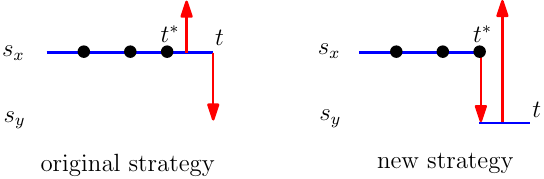}
\caption{\label{eg2} Change the strategy to have a request at the source server of the transfer}
\end{figure}

\section{Proof of Proposition \ref{pro-2}}
\begin{proof}
In an optimal replication strategy satisfying Proposition \ref{pro-1}, suppose $r_i$ is served by a local copy, but the copy is created later than $r_{p(i)}$. This implies that the copy must be created by a transfer (see Figure \ref{a} for an illustration). By Proposition \ref{pro-1}, there must be a request $r_j$ at the source server of this transfer. Then, we can replace the copy at $s[r_i]$ during $(t_j, t_i)$ with a copy at $s[r_j]$ during $(t_j, t_i)$, and delay the transfer to the time $t_i$ of $r_i$. This would not affect the service of other requests, because all transfers originating from $s[r_i]$ during
this period can originate from $s[r_j]$ instead. As a result, the total cost does not change.
In the new strategy, $r_i$ is served by a transfer, and
the characteristic in Proposition \ref{pro-1} is retained.
\end{proof}

\begin{figure}[htbp]
\centering
\includegraphics[width=9.5cm]{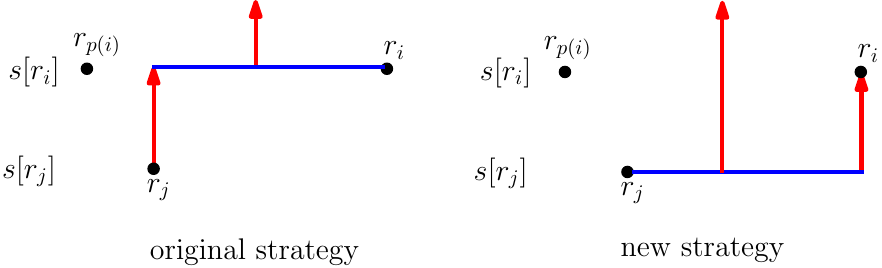}
\caption{\label{a} Illustration for the proof of Proposition \ref{pro-2}}
\end{figure}

\section{Proof of Proposition \ref{prop5}}
\begin{proof}
In an optimal offline strategy satisfying Propositions \ref{pro-1} and \ref{pro-2}, if server $s[r_i]$ does not hold a copy throughout the period $(t_{p(i)}, t_i)$,
$s[r_i]$ must receive a transfer during $(t_{p(i)}, t_i)$
in order to serve request $r_i$,
where the transfer cost incurred is $\trfrcost$.
Since $t_i - t_{p(i)} \leq \trfrcost$, we can replace the transfer with a copy at $s[r_i]$ during $(t_{p(i)}, t_i)$ without increasing the total cost. Such replacement retains the characteristics in Propositions \ref{pro-1} and \ref{pro-2}.
\end{proof}

\section{Proof of Proposition \ref{prop-add}}
\label{prooftransfer}
\begin{proof}
In an optimal replication strategy satisfying Propositions \ref{pro-1}, \ref{pro-2} and \ref{prop5}, if the source server of the transfer is $s[r_{i-1}]$, (i) holds naturally. By assumption, since no server holds a copy crossing $t_i$, $s[r_{i-1}]$ must drop its copy after the transfer. If $s[r_{i-1}]$ does not keep its copy since $t_{i-1}$, the copy must be created by a transfer after $t_{i-1}$ (see Figure \ref{new3} for an illustration). By Proposition \ref{pro-1}, there must be a request $r_j$ at the source server of this transfer. This leads to a contradiction, because $r_{i-1}$ and $r_i$ are two consecutive requests in the sequence. Hence, $s[r_{i-1}]$ must keep its copy since $t_{i-1}$, so (ii) also holds.

\begin{figure}[htbp]
\centering
\includegraphics[width=4cm]{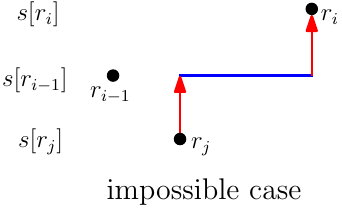}
\caption{\label{new3} Illustration that $s[r_{i-1}]$ must keep a copy since $t_{i-1}$}
\end{figure}

Now suppose that the source server of the transfer is a server $s$ other than $s[r_{i-1}]$. By assumption, since no server holds a copy crossing $t_i$, $s$ must drop its copy after the transfer. It can be proved that the copy in $s$ must be created no later than the last request $r_h$ at $s$ before $r_i$. Otherwise, the copy in $s$ does not serve any local request at $s$, so the copy in $s$ must be created by a transfer (see Figure \ref{new0} for an illustration). By Proposition \ref{pro-1}, there must be a request $r_j$ at the source server of this transfer. Then, we can replace the copy at $s$ during $(t_j, t_i)$ with a copy at $s[r_j]$ during $(t_j, t_i)$, replace the transfer from $s$ to $s[r_i]$ with a transfer from $s[r_j]$ to $s[r_i]$, and remove the transfer from $s[r_j]$ to $s$. This would not affect the service of other requests, because all transfers originating from $s$ during this period can originate from $s[r_j]$ instead. As a result, the total cost is reduced, contradicting the optimality of the strategy. Thus, the copy in $s$ must be created no later than the last request $r_h$ at $s$ before $r_i$. Note that since $s \neq s[r_{i-1}]$, we have $h<i-1$.

\begin{figure}[htbp]
\centering
\includegraphics[width=8.5cm]{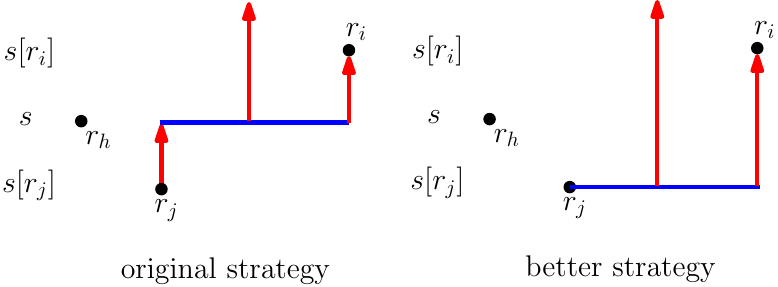}
\caption{\label{new0} The copy in $s$ must serve a local request}
\end{figure}

If $r_{i-1}$ and $r_i$ arise at the same server, we can replace the copy at $s$ during $(t_{i-1}, t_i)$ with a copy at $s[r_i]$ during $(t_{i-1}, t_i)$ (see Figure \ref{new1} for an illustration). This replacement can save a transfer cost of $\trfrcost$, which contradicts the optimality of the strategy. Hence, $r_{i-1}$ and $r_i$ must arise at different servers. Then, we can replace the copy at $s$ during $(t_{i-1}, t_i)$ with a copy at $s[r_{i-1}]$ during $(t_{i-1}, t_i)$, and change the transfer for serving $r_i$ to originate from $s[r_{i-1}]$ (see Figure \ref{new2} for an illustration). This does not affect the total cost of the strategy, and both (i) and (ii) hold in the new strategy. Meanwhile, the characteristics in Propositions \ref{pro-1}, \ref{pro-2} and \ref{prop5} are retained in the new strategy.
\end{proof}

\begin{figure}[htbp]
\centering
\includegraphics[width=7.5cm]{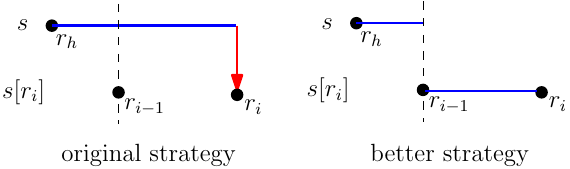}
\caption{\label{new1} $r_{i-1}$ and $r_i$ arise at the same server}
\end{figure}
\begin{figure}[htbp]
\centering
\includegraphics[width=8cm]{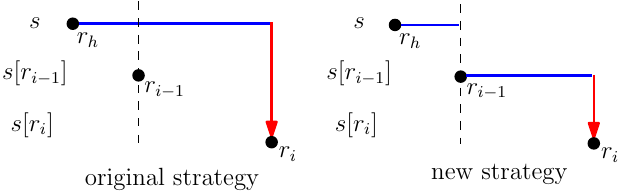}
\caption{\label{new2} $r_{i-1}$ and $r_i$ arise at different servers}
\end{figure}

\section{Proof of Lemma \ref{lem-2}}
\label{proofspecial}
\begin{proof}
By our online algorithm, each special copy occurs only after the expiration of a regular copy which has a duration of $\hyperpar\trfrcost$ according to Proposition \ref{consistrequest}. It follows from Proposition \ref{pro-3} that all special copies must be at least $\hyperpar\trfrcost$ apart from each other.
In addition, each period $(t_{p(i)}, t_{i})\in X$ covers all special copies therein and the regular copy after $r_{p(i)}$ (which has a duration $l_{i} = \hyperpar\trfrcost$ by Propositions \ref{prop:rx} and \ref{consistrequest}). Thus, if a period $(t_{p(i)}, t_{i})\in X$ covers $y$ special copies, the total duration of these special copies is bounded by $(t_{i} - t_{p(i)})-y\cdot\hyperpar\trfrcost$. Since each special copy is used to serve one request in $R_2\cup R_4$ so that there are $|R_2|+|R_4|$ special copies in total, the lemma follows by aggregating the bounds for all the periods $(t_{p(i)}, t_{i})\in X$:
\begin{eqnarray*}
    \sum_{r_i\in R_2\cup R_4}\left(t_{i}-t'_{i}\right) & \leq & \sum_{(t_{p(i)},t_i)\in X}(t_{i} - t_{p(i)})-\left(|R_2|+|R_4|\right)\cdot\hyperpar\trfrcost \\
    & = & \sum_{r_i\in R_X}(t_{i} - t_{p(i)})-\left(|R_2|+|R_4|\right)\cdot\hyperpar\trfrcost.
\end{eqnarray*}
\end{proof}

\section{Proof of Lemma \ref{lem-5}}
\begin{proof}
In the case of $|X|\geq 2$, by Proposition \ref{prop:rx}, the end sentinel requests $R_X$ of all storage periods in $X$ are \textbf{Type-1/2/4} requests.

In the optimal offline strategy, following the discussion in Section \ref{sec:preparation}, for each storage period $(t_{p({k_j})}, t_{k_j})\in X$ where $j < |Q|$, $(t_{p({k_j})}, t_{k_j})$  crosses the start time  $t_{p({k_{j+1}})}$ of the next period $(t_{p({k_{j+1}})},$ $t_{k_{j+1}})\in Q$. In server $s[r_{k_{j+1}}]$, there must be a \textbf{Type-1/2/4} request in the period $(t_{p({k_j})}, t_{p(k_{j+1})}]$. This is because if all requests in $s[r_{k_{j+1}}]$ during $(t_{p({k_j})}, t_{p(k_{j+1})}]$ are \textbf{Type-3} requests, by definition, a regular copy is stored in $s[r_{k_{j+1}}]$ throughout $(t_{p({k_j})}, t_{p(k_{j+1})}]$ in our online algorithm. By Proposition \ref{pro-3}, it implies that there is no special copy in the period $(t_{p({k_j})},$ $t_{p(k_{j+1})}]$, which contradicts the assumption $(t_{p({k_j})}, t_{k_j})\in X$.

Let $r_g$ denote the last \textbf{Type-1/2/4} request in server $s[r_{k_{j+1}}]$ in the period $(t_{p({k_j})}, t_{p(k_{j+1})}]$. Then, there is a copy in server $s[r_{k_{j+1}}]$ throughout $(t_g, t_{p(k_{j+1})}]$ (since all requests therein are \textbf{Type-3}) and hence $(t_g, t_{k_{j+1}})$. We can show that $r_g \notin R_X$ by contradiction. Assume on the contrary that $r_g \in R_X$, i.e., $(t_{p(g)},t_{g})\in X$.
Then, there is a copy in server $s[r_{k_{j+1}}]$ throughout $(t_{p(g)}, t_{k_{j+1}})$ in the optimal offline strategy (see Figure \ref{RR} for an illustration). In addition, there is also a copy in server $s[r_{k_j}]$ during $(t_{p(k_j)},t_{k_j})$.
By Propositions \ref{prop:rx} and \ref{consistrequest}, both periods $(t_{p(k_j)},t_{k_j})$ and $(t_{p(g)},t_{g})$ are longer than $\trfrcost$.
If $t_{p(g)}<t_{p(k_j)}$,  the storage period $(t_{p(k_j)},t_{k_j})$ can be replaced by a transfer from $s[r_{k_{j+1}}]$ to $s[r_{k_j}]$ to serve $r_{k_j}$ which saves cost.
If $t_{p(g)}>t_{p(k_j)}$, the storage period $(t_{p(g)},t_{g})$ can be replaced by a transfer from $s[r_{k_j}]$ to $s[r_{k_{j+1}}]$ to serve $r_{g}$ which saves cost.
In either case, it contradicts the optimality of the offline strategy. Therefore, we must have $r_{g}\notin R_X$. Moreover, note that the request $r_g$ found falls in the period $(t_{p({k_j})}, t_{p(k_{j+1})}]$. Thus, all the requests $r_g$'s for different storage periods in $X$ are distinct.

\begin{figure}[htbp]
\centering
\includegraphics[width=12cm]{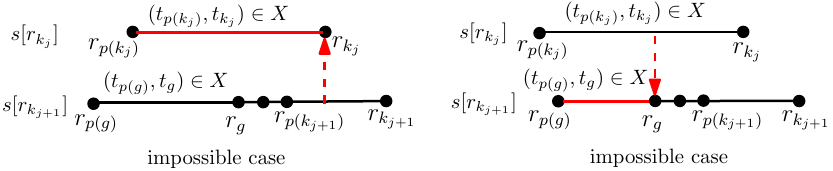}
\caption{Illustration for the case of $|X| \geq 2$}
\label{RR}
\end{figure}

In summary, for each storage period $(t_{p(k_j)},t_{k_j})\in X$ where $j<|Q|$, we can find two \textbf{Type-1/2/4} requests $r_g$ and $r_{k_j}$. If the storage period $(t_{p(k_{|Q|})},t_{k_{|Q|}})\in X$, we can find one \textbf{Type-1/2/4} request $r_{k_{|Q|}}$. All the above requests are distinct. Thus, in total, there are at least $(2|X|-1)$ \textbf{Type-1/2/4} requests.

In the case of $|X|=1$, suppose $(t_{p(k_j)},t_{k_j})$ is the only storage period in $X$. If $j<|Q|$, similar to the above arguments, we can find two \textbf{Type-1/2/4} requests. If $j=|Q|$, by Proposition \ref{prop:rx}, the end sentinel request $r_{k_{|Q|}}\in R_X$ is a \textbf{Type-1/2/4} request. To find another \textbf{Type-1/2/4} request, note that we have two different servers $s[r_{k_{|Q|}}]$ and $s[r_{k_{|Q|-1}}]$ since $|Q| \geq 2$. At least one of them is not the server $s[r_d]$ holding the copy over the first storage period $(t_{p(k_1)},t_{k_1})$ in $Q$ (where $p(k_1) = d$) in the optimal offline strategy. Let $s_g$ denote that server. Let $r_g$ denote the first request at $s_g$ after time $t_{p(k_1)} = t_d$. By the partition definition, there is no copy at $s_g$ crossing time $t_d$, so $r_g$ must be served by a transfer in the optimal offline strategy. By Proposition \ref{prop5}, $t_g-t_{p(g)}>\trfrcost$. It follows from Proposition \ref{consistrequest} that $r_g$ is a \textbf{Type-1/2/4} request by our online algorithm. Together with $r_{k_{|Q|}}$, we have two \textbf{Type-1/2/4} requests.
\end{proof}

\section{Proof of Lemma \ref{Lem-b}}
\begin{proof}
First, we prove the existence of at least one request during $(t_{p(k_1)},t_{k_1})$ at some server other than $s[r_{k_1}]$. Since $r_{k_1}$ is a \textbf{Type-1/2} request, by Proposition \ref{consistrequest}, $t_{k_1}-t_{p(k_1)}>\trfrcost$ and the duration of the regular copy after $r_{p(k_1)}$ is $\hyperpar\trfrcost$. Suppose there is no request during $(t_{p(k_1)},t_{k_1})$ at any server other than $s[r_{k_1}]$. For each server $s\neq s[r_{k_1}]$, we use $r_{h}$ to denote its first local request after $t_{k_1}$, then $r_{p(h)}$ is its last local request before $t_{p(k_1)}$, i.e., $t_{p(h)}<t_{p(k_1)}<t_{k_1}<t_h$ (see Figure \ref{AAAA5} for an illustration). Since $t_{k_1}-t_{p(k_1)}>\trfrcost$, we  have $t_h-t_{p(h)}>\trfrcost$ and hence by Proposition \ref{consistrequest}, the duration of the regular copy after $r_{p(h)}$ is $\hyperpar\trfrcost$. This regular copy must expire before the regular copy after $r_{p(k_1)}$ since $t_{p(h)}<t_{p(k_1)}$. Thus, $s[r_{k_1}]$ must hold the only copy when its regular copy expires, so its regular copy will switch to a special copy and $r_{k_1}$ will be served by a local copy. This implies that $r_{k_1}$ is a \textbf{Type-4} request, which contradicts the assumption that $r_{k_1}$ is a \textbf{Type-1/2} request.

\begin{figure}[htbp]
\centering
\includegraphics[width=4.5cm]{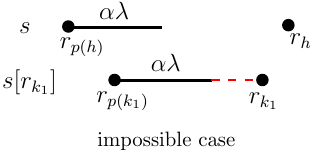}
\caption{Illustration for the proof of Lemma \ref{Lem-b}}
\label{AAAA5}
\end{figure}

Next, consider a server $s\neq s[r_{k_1}]$ with requests arising during $(t_{p(k_1)},t_{k_1})$.
Let $r_g$ denote the first request at $s_g$ after time $t_{p(k_1)} = t_d$
(the beginning time of the partition).
By the partition definition, there is no copy at $s$ crossing time $t_d$, so $r_g$ must be served by a transfer in the optimal offline strategy. By Proposition \ref{prop5}, $t_g-t_{p(g)}>\trfrcost$. It follows from Proposition \ref{consistrequest} that $r_g$ is a \textbf{Type-1/2/4} request by our online algorithm. Together with $r_{k_1}$, we have at least two \textbf{Type-1/2/4} requests.
\end{proof}

\section{Proof of Observation \ref{interobs}}
\label{prooflb}
\begin{proof}
If $r_i$ is a \textbf{Type-$K_{1a}$} request, $\textbf{online}(r_{i-1}, r_{i})$ includes a transfer cost of $\trfrcost$ (for serving $r_i$) and a storage cost at least $t_i - t_{i-1}$ (since there is at least one data copy at any time).

If $r_i$ is a \textbf{Type-$K_{1b}$} request, by the same arguments, $\textbf{online}(r_{i-1}, r_{i})$ is at least $(t_i - t_{i-1}) + \trfrcost = \epsilon + \trfrcost = 2(t_i - t_{i-1}) + \trfrcost - \epsilon$.

If $r_i$ is a \textbf{Type-$K_{1c}$} request, $\textbf{online}(r_{i-1}, r_{i})$ can be divided into three parts: (1) the storage cost at $s$ and the cost of transfers from $s[r_{i-1}]$ to $s$ (if any) during $(t_{i-1}, t_i)$, (2) the storage cost at $s[r_{i-1}]$ and the cost of transfers from $s$ to $s[r_{i-1}]$ (if any) during $(t_{i-1}, t_i)$, (3) the cost $\trfrcost$ of the transfer to serve $r_i$ at time $t_i$. For part (1), by definition, $s$ keeps a copy during $(t', t_{i}-\epsilon)$, where $t'=\max\{t_{i-1}+\epsilon, t_{p(i)}+\trfrcost+\epsilon\}$. If $s$ persistently keeps its copy throughout the period $(t_{i-1}, t_i-\epsilon)$, its storage cost is $t_i-t_{i-1}-\epsilon$. Otherwise, there must be a transfer from $s[r_{i-1}]$ to $s$ for $s$ to hold a copy from $t'$ onward, which incurs a transfer cost $\trfrcost > t_i-t_{i-1}-\epsilon$ by Observation \ref{requestk2}. Hence, the cost of part (1) is at least $t_i-t_{i-1}-\epsilon$. For part (2), since $r_i$ is served by a transfer from $s[r_{i-1}]$, it implies that $s[r_{i-1}]$ holds a copy at time $t_i$. If $s[r_{i-1}]$ persistently keeps its copy throughout the period $(t_{i-1}, t_i)$, its storage cost is $t_i-t_{i-1}$. Otherwise, there must be a transfer from $s$ to $s[r_{i-1}]$ for $s[r_{i-1}]$ to hold a copy at $t_i$, which incurs a transfer cost $\trfrcost>t_i-t_{i-1}-\epsilon$ by Observation \ref{requestk2}. Moreover, since $s$ does not hold a copy during $(t_i-\epsilon, t_i)$, $s[r_{i-1}]$ must hold a copy during this period, which implies that the storage cost at $s[r_{i-1}]$ is at least $\epsilon$. Hence, the cost of part (2) is at least $t_i-t_{i-1}$. Adding up parts (1), (2) and (3), we have $\textbf{online}(r_{i-1}, r_{i})\geq2(t_i-t_{i-1})+\trfrcost-\epsilon$.

In the case that $r_i$ is a \textbf{Type-$K_{2}$} request, if there is at least one transfer between $s[r_{i-1}]$ and $s$ during $(t_{i-1}, t_{i})$, $\textbf{online}(r_{i-1}, r_{i})$ includes a transfer cost of $\trfrcost$ and a storage cost at least $t_i - t_{i-1}$ (since there is at least one data copy at any time). Otherwise, if there is no transfer, $s[r_{i-1}]$ must persistently keep its copy during $(t_{i-1}, t_{i})$ to serve $r_i$ locally, which incurs a storage cost $t_i - t_{i-1}$. Meanwhile, by definition, $s$ persistently keeps its copy during $(t', t_{i-1}+\trfrcost+\epsilon)$. Since there is no transfer, then it must persistently keep its copy during $(t_{i-1}, t_{i-1}+\trfrcost+\epsilon)$, which incurs a storage cost $\trfrcost+\epsilon$. Hence, $\textbf{online}(r_{i-1}, r_{i})$ includes a total storage cost at least $t_i - t_{i-1} + \trfrcost = 2\trfrcost+\epsilon$ (Observation \ref{requestk1}).
\end{proof}

\section{Analysis of Subsequences}
\label{subsequence}

Similar to $\textbf{online}(r_{i}, r_{j})$, we use $\textbf{offline}(r_{i}, r_{j})$ to denote the total storage and transfer cost incurred over the period $(t_{i}, t_j]$ in the offline strategy constructed.

\textbf{Case 1:} A subsequence $\langle r_{i-1}, r_i \rangle$, where $r_{i-1}$ is \textbf{Type-$K_{2}$} and $r_{i}$ is \textbf{Type-$K_{1}$}. By definition, $r_{i-2}$ and $r_{i-1}$ are at the same server. By Observation \ref{requestk1}, $t_{i-1}-t_{i-2}=\trfrcost+\epsilon$. In an offline strategy, $s[r_{i-1}]$ can keep its copy throughout the period $(t_{i-2}, t_i)$ and serve $r_{i}$ by a transfer (see Figure \ref{subseq}(a)). Hence, $\textbf{offline}(r_{i-2}, r_i)\leq (t_i-t_{i-2})+\trfrcost=(t_i-t_{i-1})+2\trfrcost+\epsilon$.

Since $r_{i-1}$ is a \textbf{Type-$K_{2}$} request, $\textbf{online}(r_{i-2}, r_{i-1})\geq 2\trfrcost+\epsilon$ by Observation \ref{interobs}. Since $r_i$ is at a different server from $r_{i-2}$ and $r_{i-1}$, we must have $p(i)<i-2$. It follows from $t_{i-1}-t_{i-2}=\trfrcost+\epsilon$ that $t_{i}-t_{p(i)}>t_{i-1}-t_{i-2}=\trfrcost+\epsilon$. By Observation \ref{requestk2}, $r_i$ must be a \textbf{Type-$K_{1b}$} or \textbf{Type-$K_{1c}$} request. Then by Observation \ref{interobs}, $\textbf{online}(r_{i-1}, r_{i})\geq2(t_i-t_{i-1})+\trfrcost-\epsilon$. Hence, $\textbf{online}(r_{i-2}, r_{i})=\textbf{online}(r_{i-2}, r_{i-1})+\textbf{online}(r_{i-1}, r_{i})\geq2(t_i-t_{i-1})+3\trfrcost$. Thus, $\frac{\textbf{online}(r_{i-2}, r_{i})}{\textbf{offline}(r_{i-2}, r_i)}\geq\frac{2(t_i-t_{i-1})+3\trfrcost}{(t_i-t_{i-1})+2\trfrcost+\epsilon}$, which approaches at least $\frac{3}{2}$ as $\epsilon \rightarrow 0$.

\begin{figure}[htbp]
\centering
\includegraphics[width=12cm]{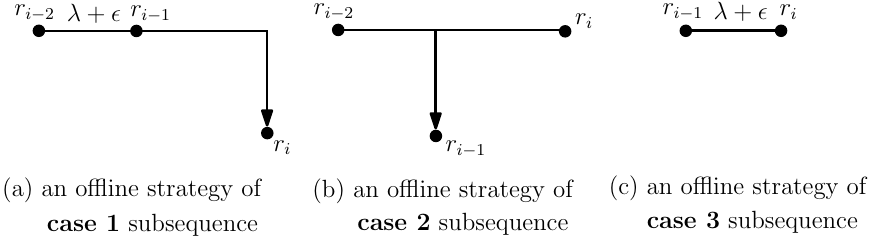}
\caption{\label{subseq} Offline strategies of \textbf{case 1/2/3} subsequences}
\end{figure}

\textbf{Case 2:} A subsequence $\langle r_{i-1}, r_i \rangle$ of two successive \textbf{Type-$K_{1}$} requests. By definition, $r_{i-2}$ and $r_i$ are at the same server, while $r_{i-1}$ is at a different server. In an offline strategy, $s[r_i]$ can keep its copy throughout the period $(t_{i-2}, t_i)$ and serve $r_{i-1}$ by a transfer (see Figure \ref{subseq}(b)). Hence, $\textbf{offline}(r_{i-2}, r_i)\leq (t_i-t_{i-2})+\trfrcost$.

(1) If $r_i$ is a \textbf{Type-$K_{1a}$} request, we have $\textbf{online}(r_{i-1}, r_{i})\geq(t_i-t_{i-1})+\trfrcost$ by Observation \ref{interobs} and $t_i-t_{i-2}=t_i-t_{p(i)}=\trfrcost+\epsilon$ by Observation \ref{requestk2}. Since $r_{i-1}$ is a \textbf{Type-$K_{1}$} request, by Observation \ref{interobs}, we always have $\textbf{online}(r_{i-2}, r_{i-1})\geq(t_{i-1}-t_{i-2})+\trfrcost$. Hence, $\textbf{online}(r_{i-2}, r_{i})=\textbf{online}(r_{i-2}, r_{i-1})+\textbf{online}(r_{i-1}, r_{i})\geq(t_i-t_{i-2})+2\trfrcost$. Thus, $\frac{\textbf{online}(r_{i-2}, r_{i})}{\textbf{offline}(r_{i-2}, r_i)}\geq\frac{(t_i-t_{i-2})+2\trfrcost}{(t_i-t_{i-2})+\trfrcost}=\frac{3\trfrcost+\epsilon}{2\trfrcost+\epsilon}$, which approaches $\frac{3}{2}$ as $\epsilon \rightarrow 0$.

(2) If $r_i$ is a \textbf{Type-$K_{1b}$} or \textbf{Type-$K_{1c}$} request, we have $\textbf{online}(r_{i-1}, r_{i})\geq 2(t_i-t_{i-1})+\trfrcost-\epsilon$ by Observation \ref{interobs}. Since $r_{i-1}$ is a \textbf{Type-$K_{1}$} request, by Observation \ref{interobs}, we always have $\textbf{online}(r_{i-2}, r_{i-1})\geq(t_{i-1}-t_{i-2})+\trfrcost$, and by Observation \ref{requestk2}, we have $t_{i-1}-t_{i-2}<\trfrcost+\epsilon$. Hence, $\textbf{online}(r_{i-2}, r_{i})=\textbf{online}(r_{i-2}, r_{i-1})+\textbf{online}(r_{i-1}, r_{i})\geq(t_i-t_{i-2})+(t_{i}-t_{i-1})+2\trfrcost-\epsilon$. Thus, $\frac{\textbf{online}(r_{i-2}, r_{i})}{\textbf{offline}(r_{i-2}, r_i)}\geq\frac{(t_i-t_{i-2})+(t_{i}-t_{i-1})+2\trfrcost-\epsilon}{(t_i-t_{i-2})+\trfrcost}=1+\frac{(t_{i}-t_{i-1})+\trfrcost-\epsilon}{(t_i-t_{i-1})+(t_{i-1}-t_{i-2})+\trfrcost}>1+\frac{(t_{i}-t_{i-1})+\trfrcost-\epsilon}{(t_i-t_{i-1})+2\trfrcost+\epsilon}$, which approaches at least $\frac{3}{2}$ as $\epsilon \rightarrow 0$.

\textbf{Case 3:} A subsequence $\langle r_i \rangle$ of a single \textbf{Type-$K_{2}$} request. By definition, $r_{i-1}$ and $r_{i}$ are at the same server. By Observation \ref{requestk1}, $t_{i}-t_{i-1}=\trfrcost+\epsilon$. In an offline strategy, $s[r_i]$ can keep its copy throughout the period $(t_{i-1}, t_i)$ and serve $r_{i}$ locally (see Figure \ref{subseq}(c)). Hence, $\textbf{offline}(r_{i-1}, r_i)\leq t_i-t_{i-1}=\trfrcost+\epsilon$. By Observation \ref{interobs}, $\textbf{online}(r_{i-1}, r_{i})\geq 2\trfrcost+\epsilon$. Thus, $\frac{\textbf{online}(r_{i-1}, r_{i})}{\textbf{offline}(r_{i-1}, r_i)}\geq\frac{2\trfrcost+\epsilon}{\trfrcost+\epsilon}$, which approaches $2$ as $\epsilon \rightarrow 0$.

In summary, the cost ratio between the online algorithm and the offline strategy for any \textbf{case 1/2/3} subsequence is at least $\frac{3}{2}$ or can be made arbitrarily close to $\frac{3}{2}$ as $\epsilon \rightarrow 0$.

Any request sequence $\langle r_0, r_1, r_2, \dots, r_m \rangle$ generated by the adversary can be partitioned into \textbf{case 1/2/3} subsequences starting from $r_1$ or $r_2$. Specifically, we can identify maximal segments of successive \textbf{Type-$K_1$} requests (called maximal \textbf{Type-$K_1$} segments for short). A maximal \textbf{Type-$K_1$} segment means that the requests immediately preceding and succeeding the segment are both \textbf{Type-$K_2$} requests. If a maximal \textbf{Type-$K_1$} segment has an even number of \textbf{Type-$K_1$} requests, we can further break it into \textbf{case 2} subsequences. If a maximal \textbf{Type-$K_1$} segment has an odd number of \textbf{Type-$K_1$} requests, we can pair the first \textbf{Type-$K_1$} request with the preceding \textbf{Type-$K_2$} request to form a \textbf{case 1} subsequence and break the remaining segment into \textbf{case 2} subsequences. For the rest \textbf{Type-$K_2$} requests, each of them constitutes a \textbf{case 3} subsequence on its own. With the above partitioning, we are left with only $\langle r_0, r_1 \rangle$ or $\langle r_0, r_1, r_2 \rangle$ at the beginning of the request sequence, where $\langle r_0, r_1, r_2 \rangle$ happens only when $r_2$ is at the head of a maximal \textbf{Type-$K_1$} segment of odd length. 

The aforesaid offline strategies constructed for individual \textbf{case 1/2/3} subsequences can be concatenated to form an offline strategy for all the requests after $r_1$ or $r_2$. Thus, by the above analysis, $\frac{\textbf{online}(r_1, r_m)}{\textbf{offline}(r_1, r_m)}$ or $\frac{\textbf{online}(r_2, r_m)}{\textbf{offline}(r_2, r_m)}$ approaches at least $\frac{3}{2}$ as $\epsilon \rightarrow 0$, where $\textbf{offline}(r_{i}, r_{j})$ denotes the total storage and transfer cost incurred over the period $(t_{i}, t_j]$ in the offline strategy constructed. Next, let us consider the requests left at the beginning. In the case of $\langle r_0, r_1 \rangle$, an offline strategy can perform a transfer at time $t_1$ to serve $r_1$ and then both servers hold data copies right after $r_1$ for whatever replication strategy to carry on (see Figure \ref{r0r1}). As a result, $\textbf{offline}(r_0, r_1) = \trfrcost+\epsilon$. Consequently, $\frac{\textbf{online}(r_0, r_m)}{\textbf{offline}(r_0, r_m)} \geq \frac{\textbf{online}(r_1, r_m)}{\textbf{offline}(r_0, r_m)} = \frac{\textbf{online}(r_1, r_m)}{\textbf{offline}(r_1, r_m)+\trfrcost+\epsilon}$, which approaches at least $\frac{3}{2}$ as $\epsilon \rightarrow 0$ and $m \rightarrow \infty$. 
Similarly, in the case of $\langle r_0, r_1, r_2 \rangle$, an offline strategy can hold data copies at both servers after $r_1$ until $r_2$ and then they both have copies right after $r_2$ for whatever replication strategy to carry on (see Figure \ref{r0r1r2}). As a result, $\textbf{offline}(r_0, r_2) < 3\trfrcost+3\epsilon$.\footnote{Since $r_2$ is a \textbf{Type-$K_1$} request, by Observation \ref{requestk2}, the time duration between $r_1$ and $r_2$ is shorter than $\trfrcost+\epsilon$.} Hence, $\frac{\textbf{online}(r_0, r_m)}{\textbf{offline}(r_0, r_m)} \geq \frac{\textbf{online}(r_2, r_m)}{\textbf{offline}(r_0, r_m)} > \frac{\textbf{online}(r_2, r_m)}{\textbf{offline}(r_2, r_m)+3\trfrcost+3\epsilon}$, which also approaches at least $\frac{3}{2}$ as $\epsilon \rightarrow 0$ and $m \rightarrow \infty$. Therefore, the consistency of any deterministic learning-augmented algorithm must be at least $\frac{3}{2}$. 

\begin{figure}[htbp]
\centering
\includegraphics[width=0.9cm]{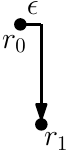}
\caption{\label{r0r1}
An offline strategy for $\langle r_0, r_1 \rangle$}
\end{figure}
\begin{figure}[htbp]
\centering
\includegraphics[width=3cm]{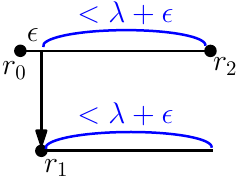}
\caption{\label{r0r1r2}
An offline strategy for $\langle r_0, r_1, r_2 \rangle$}
\end{figure}

\section{Experimental Evaluation}
\label{experiment}
\subsection{Experimental Setup}
We conduct simulation experiments to evaluate our proposed
algorithms, using the object storage traces provided by IBM (available at https://www.ibm.com/cloud/blog/object-storage-traces). The traces include read and write requests made to objects in a cloud-based
object storage over 7 days in the year of 2019. We select several objects and filter out their write operations. For each object, we randomly distribute its requests over 10 different servers following a Zipf distribution, where each request is assigned to a server indexed by $i$ with probability $i^{-1}\slash\sum_{j=1}^{10}j^{-1}$ ($i=1,2,\dots,10$). The storage cost rate of each server is assumed to be 1 per second. The predictions of inter-request times are randomly generated according to the ground truth and a specified prediction accuracy. We obtain the online costs of serving the request sequence by running our proposed algorithm with different combinations of hyper-parameter $\hyperpar\in\left\{0, 0.1,0.2,\dots,1\right\}$, transfer cost $\trfrcost\in\left\{10, 100, 1000, 10000\right\}$ and prediction accuracy of $\left\{0\%, 10\%,20\%,\dots,100\%\right\}$. We normalize these online costs by the optimal offline costs derived from dynamic programming \cite{wang2018cost},
and produce 3-dimensional plots to illustrate the algorithm performance under different settings. Similar performance trends are observed for the objects that we experiment with. Thus, we present the representative results of one particular object (with identifier ``652aaef228286e0a'') which has 11688 read requests over 7 days.

\subsection{Experimental Results}
First, we study the original Algorithm \ref{alg3}. Figures \ref{2} to \ref{5} show the online-to-optimal cost ratios for different settings (note that the vertical axis has different scales in different figures). All the ratios are bounded by $1+\frac{1}{\hyperpar}$ (robustness derived in Section \ref{sec:robustness}), and the ratios of those 100\% prediction accuracy settings are all bounded by $\frac{5+\hyperpar}{3}$ (consistency derived in Section \ref{sec:consistency}). These results verify our robustness and consistency analysis.

\begin{figure*}[htbp]
\centering
\begin{minipage}[t]{0.24\textwidth}
\centering
\includegraphics[width=1\textwidth]{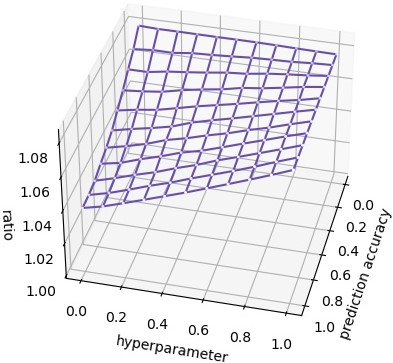}
\caption{\label{2} $\trfrcost = 10$}
\end{minipage}
\begin{minipage}[t]{0.24\textwidth}
\centering
\includegraphics[width=1\textwidth]{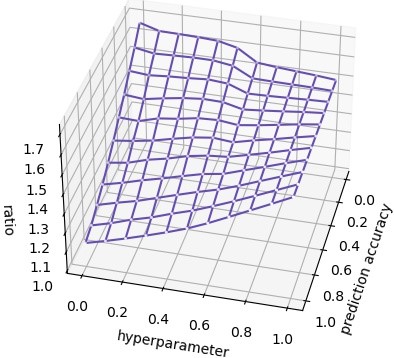}
\caption{\label{3} $\trfrcost = 100$}
\end{minipage}
\begin{minipage}[t]{0.24\textwidth}
\centering
\includegraphics[width=1\textwidth]{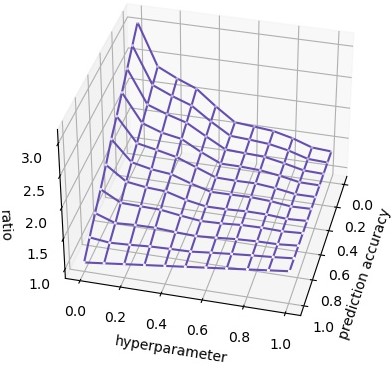}
\caption{\label{4} $\trfrcost = 1000$}
\end{minipage}
\begin{minipage}[t]{0.24\textwidth}
\centering
\includegraphics[width=1\textwidth]{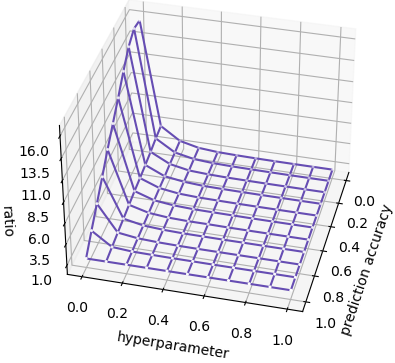}
\caption{\label{5} $\trfrcost = 10000$}
\end{minipage}
\end{figure*}

\begin{figure*}[htbp]
\centering
\begin{minipage}[t]{0.24\textwidth}
\centering
\includegraphics[width=1\textwidth]{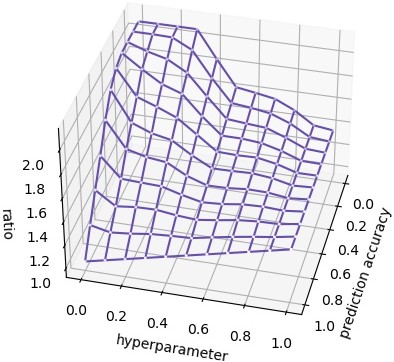}
\caption{\label{adapt1}
$\trfrcost = 1000, \beta = 0.1$}
\end{minipage}
\centering
\begin{minipage}[t]{0.24\textwidth}
\centering
\includegraphics[width=1\textwidth]{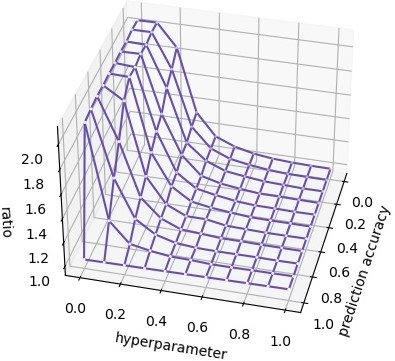}
\caption{\label{adapt2}
$\trfrcost = 10000, \beta = 0.1$}
\end{minipage}
\begin{minipage}[t]{0.24\textwidth}
\centering
\includegraphics[width=1\textwidth]{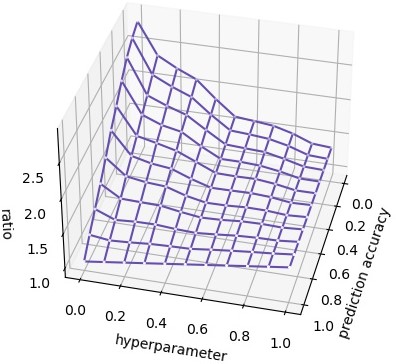}
\caption{\label{adapt3}
$\trfrcost = 1000, \beta = 1$}
\end{minipage}
\begin{minipage}[t]{0.24\textwidth}
\centering
\includegraphics[width=1\textwidth]{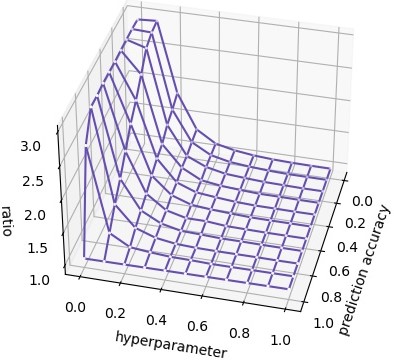}
\caption{\label{adapt4}
$\trfrcost = 10000, \beta = 1$}
\end{minipage}
\end{figure*}

One common feature of all the figures is that when the hyper-parameter $\hyperpar=1$, the online-to-optimal cost ratio is a constant, no matter what the prediction accuracy is. The reason is that when $\hyperpar=1$, the proposed algorithm does not make use of predictions to adjust the durations of regular copies and thus its performance is independent of the prediction accuracy. Another common feature is that the online-to-optimal cost ratio attains its minimum when the hyper-parameter $\alpha$ gets close to 0 and the prediction accuracy goes towards 100\%. This is because when the predictions forecast future request arrivals correctly and the proposed algorithm makes almost full use of the predictions, the online cost would be reduced. This demonstrates the effectiveness of our proposed algorithm when incorporating accurate predictions. 

Since there are 11688 requests over 7 days distributed among 10 servers, the average inter-request
time at a server is about 500 seconds. When the transfer cost $\trfrcost$ is very low (Figure \ref{2}), inter-request times are generally much greater than $\trfrcost$. Hence, in the optimal offline strategy, usually only one data copy is maintained in the system, and most requests are served by transfers. By our online algorithm, the regular copies have intended durations bounded by $\trfrcost$ irrespective of the prediction accuracy, and thus most requests are also served by transfers. Since inter-request times are much greater than $\trfrcost$, one special copy is typically maintained in the system. Therefore, our online algorithm behaves similarly to the optimal offline strategy. As a result, the online-to-optimal cost ratio is close to 1. 

When the transfer cost $\trfrcost$ is higher (Figures \ref{3}, \ref{4} and \ref{5}), the online-to-optimal cost ratio reaches its peak when both the hyper-parameter $\hyperpar$ and prediction accuracy approach 0. The reason is that as $\trfrcost$ grows, more inter-request times become smaller than $\trfrcost$, and hence more requests are served by local copies in the optimal offline strategy. For those requests $r_i$'s where $\hyperpar\trfrcost < t_i-t_{p(i)}\leq \trfrcost$, mispredictions would lead to them served by transfers by our online algorithm, which incurs unnecessary transfer cost. Therefore, the performance of our online algorithm deteriorates as the prediction accuracy and $\hyperpar$ decrease. When the transfer cost $\trfrcost$ is very high (Figure \ref{5}), there is almost no difference among the online-to-optimal cost ratios for different settings, unless both $\hyperpar$ and prediction accuracy approach 0. In most settings, the ratios are close to 1. This is because most inter-request times are less than $\hyperpar\trfrcost$ in this case. By our online algorithm, the regular copies have intended durations at least $\hyperpar\trfrcost$ irrespective of the prediction accuracy, and thus most requests are served by local copies, consistent with the optimal offline strategy.

Next, we study the adapted Algorithm \ref{alg3} to achieve a robustness of $2+\beta$.
We run the original Algorithm \ref{alg3} for the first $100$ requests of the trace as a warm-up to get the initial $\textbf{OnlineU}$ and $\textbf{OPTL}$. 
After that, we run the adapted Algorithm \ref{alg3} to decide whether to set the intended duration of the regular copy after each request always to $\trfrcost$ or according to the prediction. 
Figures \ref{adapt1} to \ref{adapt4} show the online-to-optimal cost ratios for $\trfrcost = 1000, 10000$ and $\beta = 0.1, 1$. As can be seen, the adapted Algorithm \ref{alg3} successfully prevents the cost ratio from growing beyond the target robustness. This demonstrates the effectiveness of our adaptation to deal with terrible predictions. The results for $\trfrcost = 10, 100$ are the same as those for the original Algorithm \ref{alg3} (hence not repeated here) because the online-to-optimal cost ratios are well below the target robustness $2+\beta$.


\begin{thebibliography}{10}

\bibitem{bar2012growing}
Reuven Bar-Yehuda, Erez Kantor, Shay Kutten, and Dror Rawitz.
\newblock Growing half-balls: Minimizing storage and communication costs in
  {CDN}s.
\newblock In {\em Proceedings of the 39th International Colloquium on Automata,
  Languages, and Programming}, pages 416--427, 2012.

\bibitem{borodin1998}
Allan Borodin and Ran El-Yaniv.
\newblock {\em Online Computation and Competitive Analysis}, volume~53.
\newblock Cambridge University Press Cambridge, 1998.

\bibitem{2019online}
Sreenivas Gollapudi and Debmalya Panigrahi.
\newblock Online {A}lgorithms for {R}ent-or-{B}uy with {E}xpert {A}dvice.
\newblock In {\em Proceedings of the 36th International Conference on Machine
  Learning}, pages 2319--2327, 2019.

\bibitem{im2021nonclairvoyant}
Sungjin Im, Ravi Kumar, Mahshid~Montazer Qaem, and Manish Purohit.
\newblock Non-clairvoyant scheduling with predictions.
\newblock In {\em Proceedings of the 33rd ACM Symposium on Parallelism in
  Algorithms and Architectures}, pages 285--294, 2021.

\bibitem{im2023online}
Sungjin Im, Benjamin Moseley, Chenyang Xu, and Ruilong Zhang.
\newblock Online dynamic acknowledgement with learned predictions.
\newblock In {\em Proceedings of the 42nd IEEE Conference on Computer
  Communications (INFOCOM)}, pages 1--10, 2023.

\bibitem{1988Com}
Anna~R. Karlin, Mark~S. Manasse, Larry Rudolph, and Daniel~D. Sleator.
\newblock Competitive snoopy caching.
\newblock {\em Algorithmica}, 3(1-4):79--119, 1988.

\bibitem{2019k}
Rohan Kodialam.
\newblock Optimal {A}lgorithms for {S}ki {R}ental with {S}oft
  {M}achine-{L}earned {P}redictions.
\newblock {\em arXiv preprint arXiv:1903.00092}, 2019.

\bibitem{2018kumar}
Ravi Kumar, Manish Purohit, and Zoya Svitkina.
\newblock Improving {O}nline {A}lgorithms via {ML} {P}redictions.
\newblock In {\em Proceedings of the 32nd International Conference on Neural
  Information Processing Systems}, pages 9684--9693, 2018.

\bibitem{mansouri2018cost}
Yaser Mansouri and Abdelkarim Erradi.
\newblock Cost optimization algorithms for hot and cool tiers cloud storage
  services.
\newblock In {\em IEEE International Conference on Cloud Computing}, pages
  622--629, 2018.

\bibitem{2017Data}
Yaser Mansouri, Adel~Nadjaran Toosi, and Rajkumar Buyya.
\newblock Data storage management in cloud environments: Taxonomy, survey, and
  future directions.
\newblock {\em ACM Computing Surveys}, 50(6):1--51, 2017.

\bibitem{mansouri2019cost}
Yaser Mansouri, Adel~Nadjaran Toosi, and Rajkumar Buyya.
\newblock Cost optimization for dynamic replication and migration of data in
  cloud data centers.
\newblock {\em IEEE Transactions on Cloud Computing}, 7(3):705--718, 2019.

\bibitem{mitzenmacher2022algorithms}
Michael Mitzenmacher and Sergei Vassilvitskii.
\newblock Algorithms with predictions.
\newblock {\em Communications of the ACM}, 65(7):33--35, 2022.

\bibitem{2018Fog}
Carla Mouradian, Diala Naboulsi, Sami Yangui, Roch~H. Glitho, Monique~J.
  Morrow, and Paul~A. Polakos.
\newblock A comprehensive survey on fog computing: State-of-the-art and
  research challenges.
\newblock {\em IEEE Communications Surveys \& Tutorials}, 20(1):416--464, 2018.

\bibitem{vazirani2013approximation}
Vijay~V. Vazirani.
\newblock {\em Approximation Algorithms}.
\newblock Springer Science \& Business Media, 2013.

\bibitem{veeravalli2003network}
Bharadwaj Veeravalli.
\newblock Network caching strategies for a shared data distribution for a
  predefined service demand sequence.
\newblock {\em IEEE Transactions on Knowledge and Data Engineering},
  15(6):1487--1497, 2003.

\bibitem{wang2018cost}
Yang Wang, Shuibing He, Xiaopeng Fan, Chengzhong Xu, and Xianhe Sun.
\newblock On cost-driven collaborative data caching: A new model approach.
\newblock {\em IEEE Transactions on Parallel and Distributed Systems},
  30(3):662--676, 2018.

\bibitem{2021Cost}
Yang Wang, Yong Zhang, Xinxin Han, Pengfei Wang, Chengzhong Xu, Joseph Horton,
  and Joseph Culberson.
\newblock Cost-driven data caching in the cloud: An algorithmic approach.
\newblock In {\em Proceedings of the 40th IEEE Conference on Computer
  Communications (INFOCOM)}, pages 1--10, 2021.

\bibitem{2020optimal}
Alexander Wei and Fred Zhang.
\newblock Optimal {R}obustness-{C}onsistency {T}rade-offs for
  {L}earning-{A}ugmented {O}nline {A}lgorithms.
\newblock In {\em Proceedings of the 34th International Conference on Neural
  Information Processing Systems}, pages 8042--8053, 2020.

\end{thebibliography}
\end{document}